\documentclass[prodmode,acmtalg]{acmsmall}
\usepackage{natbib}

\renewcommand{\log}{\lg}
\newcommand{\Oh}[1]
    {\ensuremath{\mathcal{O}\!\left( {#1} \right)}}
\newcommand{\rank}
    {\ensuremath{\mathrm{rank}}}
\newcommand{\select}
    {\ensuremath{\mathrm{select}}}
\newcommand{\occ}
    {\ensuremath{\mathrm{occ}}}
\newcommand{\ignore}[1]{}
    
\begin{document}

\markboth{D. Belazzougui, T. Gagie and G. Navarro}{Frequency-Sensitive Queries in Ranges}

\title{Frequency-Sensitive Queries in Ranges}
\author{DJAMAL BELAZZOUGUI \affil{Department of Computer Science, University of Helsinki}
    TRAVIS GAGIE \affil{Department of Computer Science, University of Helsinki}
    GONZALO NAVARRO \affil{Department of Computer Science, University of Chile}}

\begin{abstract}
Karpinski and Nekrich (2008) introduced the problem of parameterized range majority, which asks to preprocess a string of length $n$ such that, given the endpoints of a range, one can quickly find all the distinct elements whose relative frequencies in that range are more than a threshold $\tau$.  Subsequent authors have reduced their time and space bounds such that, when $\tau$ is given at preprocessing time, we need either $\Oh{n \log (1 / \tau)}$ space and optimal $\Oh{1 / \tau}$ query time or linear space and $\Oh{(1 / \tau) \log \log \sigma}$ query time, where $\sigma$ is the alphabet size.  In this paper we give the first linear-space solution with optimal $\Oh{1 / \tau}$ query time.  For the case when $\tau$ is given at query time, we significantly improve previous bounds, achieving either $\Oh{n \log \log \sigma}$ space and optimal $\Oh{1 / \tau}$ query time or compressed space and $\Oh{(1 / \tau) \log \frac{\log (1 / \tau)}{\log w}}$ query time.  Along the way, we consider the complementary 
problem of parameterized range minority that was recently introduced by Chan et al.\ (2012), who achieved linear space and $\Oh{1 / \tau}$ query time even for variable $\tau$.  We improve their solution to use either nearly optimally compressed space with no slowdown, or optimally compressed space with nearly no slowdown.  Some of our intermediate results, such as density-sensitive query time for one-dimensional range counting, may be of independent interest.
\end{abstract}

\category{X.0.0}{Data Structures}{How is the 2012 classification supposed to work?}

\terms{Arrays, Range Queries}

\keywords{Parameterized range majority and minority, ...}

\acmformat{Djamal Belazzougui, Travis Gagie, and Gonzalo Navarro, 2014. Frequency-Sensitive Queries in Ranges.}

\begin{bottomstuff}
This work is supported by somebody.

Authors' addresses: D. Belazzougui and T. Gagie, Department of Computer Science, University of Helsinki; G. Navarro, Department of Computer Science, University of Chile.
\end{bottomstuff}

\maketitle

\section{Introduction} \label{sec:intro}

Finding frequent elements in a dataset is a fundamental operation in data mining.  Finding the most frequent elements can be challenging when all the distinct elements have nearly equal frequencies and we do not have the resources to compute all their frequencies exactly.  In some cases, however, we are interested in the most frequent elements only if they really are frequent.  For example, Misra and Gries~\cite{MG82} showed how, given a string and a threshold $\tau$ with \(0 < \tau \leq 1\), with two passes and $\Oh{1 / \tau}$ words of space we can find all the distinct elements in a string whose relative frequencies are at least $\tau$.  These elements are called the $\tau$-majorities of the string.  Misra and Gries' algorithm was rediscovered by Demaine, L\'opez-Ortiz and Munro~\cite{DLM02}, who noted it can be made to run in $\Oh{1}$ time per element on a word RAM with \(\Omega (\log n)\)-bit words, where $n$ is the length of the string, which is the model we use; it was then rediscovered again by Karp, 
Shenker and Papadimitriou~\cite{KSP03}.  As Cormode and Muthukrishnan~\cite{CM03} put it, ``papers on frequent items are a frequent item!''

Krizanc, Morin and Smid~\cite{KMS05} introduced the problem of preprocessing the string such that later, given the endpoints of a range, we can quickly return the mode of that range (i.e., the most frequent element).  They gave two solutions, one of which takes $\Oh{n^{2 - 2 \epsilon}}$ space for any fixed positive \(\epsilon \leq 1 / 2\), and answers queries in $\Oh{n^\epsilon \log \log n}$ time; the other takes $\Oh{n^2 \log \log n / \log n}$ space and answers queries in $\Oh{1}$ time.  Petersen~\cite{Pet08} reduced Krizanc et al.'s first time bound to $\Oh{n^\epsilon}$ for any fixed non-negative \(\epsilon < 1 / 2\), and Petersen and Grabowski~\cite{PG09} reduced the second space bound to $\Oh{n^2 \log \log n / \log n}$.  Chan et al.~\cite{CDLMW12} recently gave a linear-space solution that answers queries in $\Oh{\sqrt{n / \log n}}$ time.  They also gave evidence suggesting we cannot easily achieve query time substantially smaller than $\sqrt{n}$ using linear space; however, the best known lower bound, 
by Greve et al.~\cite{GJLT10}, says only that we cannot achieve query time \(o \left( \rule{0ex}{2ex} \log (n) / \log (s w / n) \right)\) using $s$ words of $w$ bits each.  Because of the difficulty of supporting range mode queries, Bose et al.~\cite{BKMT05} and Greve et al.~\cite{GJLT10} considered the problem of approximate range mode, for which we are asked to return an element whose frequency is at least a constant fraction of the mode's frequency.

Karpinski and Nekrich~\cite{KN08} took a different direction, analogous to Misra and Gries' approach, when they introduced the problem of preprocessing the string such that later, given the endpoints of a range, we can quickly return the $\tau$-majorities of that range.  We refer to this problem as parameterized range majority.  Assuming $\tau$ is given when we are preprocessing the string, they showed how we can store the string in $\Oh{n (1 / \tau)}$ space and answer queries in $\Oh{(1 / \tau) (\log \log n)^2}$ time.  They also gave bounds for dynamic and higher-dimensional versions.  Durocher et al.~\cite{DHMNS13} independently posed the same problem and showed how we can store the string in $\Oh{n \log (1 / \tau + 1)}$ space and answer queries in $\Oh{1 / \tau}$ time.  Notice that, because there can be up to \(1 / \tau\) distinct elements to return, this time bound is worst-case optimal.  Gagie et al.~\cite{GHMN11} showed how to store the string in compressed space --- i.e., $\Oh{n (H + 1)}$ bits, where 
$H$ is the entropy of the distribution of elements in the string --- such that we can answer queries in $\Oh{(1 / \tau) \log \log n}$ time.  They also showed how to drop the assumption that $\tau$ is fixed and simultaneously achieve optimal query time, at the cost of increasing the space bound by a \((\log n)\)-factor.  That is, they gave a data structure that stores the string in $\Oh{n (H + 1)}$ space such that later, given the endpoints of a range and $\tau$, we can return the $\tau$-majorities of that range in $\Oh{1 / \tau}$ time.  Chan et al.~\cite{CDSW12} recently gave another solution for variable $\tau$, which also has $\Oh{1 / \tau}$ query time but uses $\Oh{n \log n}$ space.  As far as we know, these are all the relevant bounds for Karpinski and Nekrich's original exact, static, one-dimensional problem, both for fixed and variable $\tau$; they are summarized in Table~\ref{tab:results} together with our own results.  Related work includes Elmasry et al.'s~\cite{EHMN11} solution for the dynamic 
version and Lai, Poon and Shi's~\cite{LPS08} and Wei and Yi's~\cite{WY11} approximate solutions for the dynamic version.

\begin{table}
\tbl{Results for the problem of parameterized range majority on a string of length $n$ over an alphabet of size $\sigma$ in which the distribution of the elements has entropy $H$.\label{tab:results}}
{\begin{tabular}{l@{\hspace{2ex}}|@{\hspace{2ex}}c@{\hspace{3ex}}c@{\hspace{3ex}}c}
source & space & time & variable $\tau$\\[.5ex]
\hline\\[-2ex]
\cite{KN08} & $\Oh{n (1 / \tau)}$ words & $\Oh{(1/\tau)(\log \log n)^2}$ & no\\[1ex]
\cite{DHMNS13} & $\Oh{{n \log (1 / \tau)}}$ words & $\Oh{1 / \tau}$ & no\\[1ex]
\cite{GHMN11} & $\Oh{n (H + 1)}$ bits & $\Oh{(1 / \tau) \log \log \sigma}$ & no\\[1ex]
Theorem~\ref{thm:fixed maj} & $\Oh{n}$ words & $\Oh{1 / \tau}$ & no\\[1ex]
\hline\\[-1.5ex]
\cite{GHMN11} & $\Oh{n (H + 1)}$ words & $\Oh{1 / \tau}$ & yes\\[1ex]
\cite{CDSW12} & $\Oh{n \log n}$ words & $\Oh{1 / \tau}$ & yes\\[1ex]
Theorem~\ref{thm:fast maj} & $\Oh{n \log \log \sigma}$ words & $\Oh{1 / \tau}$ & yes\\[1ex]
Theorem~\ref{thm:small maj better} & \(n H + o (n)(H+1)\) bits & $\Oh{(1 / \tau) \log \log \sigma}$ & yes\\[1ex]
Theorem~\ref{thm:sensitive maj} & \((1 + \epsilon) n H +o(n)\) bits & $\Oh{(1 / \tau) \log \frac{\log (1 / \tau)}{\log w}}$ & yes
\end{tabular}}
\end{table}

In this paper we first consider the complementary problem of parameterized range minority, which was recently introduced by Chan et al.~\cite{CDSW12} (and then generalized to trees by Durocher et al~\cite{DSST13}).  For this problem we are asked to preprocess the string such that later, given the endpoints of a range, we can return (if one exists) a distinct element that occurs in that range but is not one of its $\tau$-majorities.  Such an element is called a $\tau$-minority for the range.  At first, finding a $\tau$-minority might seem harder than finding a $\tau$-majority because, e.g., we are less likely to find a $\tau$-minority by sampling.  Nevertheless, Chan et al. gave a linear-space solution with $\Oh{1 / \tau}$ query time even when $\tau$ is given at query time.  In Section~\ref{sec:minority} we give two results, also for the case of variable $\tau$:
\begin{enumerate}
\item for any positive constant $\epsilon$, a solution with $\Oh{1 / \tau}$ query time that takes \((1 + \epsilon) n H + \Oh{n}\) bits;
\item for any function \(f (n) = \omega (1)\), a solution with $\Oh{(1 / \tau)\,f (n)}$ query time that takes \(n H + o(n)(H+1)\) bits.
\end{enumerate}
That is, we improve Chan et al.'s solution to use either nearly optimally compressed space with no slowdown, or optimally compressed space with nearly no slowdown.  We reuse ideas from this section in our solutions for parameterized range majority.

In Section~\ref{sec:fixed maj} we return to Karpinski and Nekrich's original problem of parameterized range majority with fixed $\tau$ and give the first linear-space solution with worst-case optimal $\Oh{1 / \tau}$ query time.  In Section~\ref{sec:variable maj} we adapt this solution to the more challenging case of variable $\tau$ and give three results:
\begin{enumerate}
\item a solution with $\Oh{1 / \tau}$ query time that takes $\Oh{n \log \log \sigma}$ space, where $\sigma$ is the size of the alphabet;
\item a solution with $\Oh{(1 / \tau) \log \log \sigma}$ query time that takes \(n H + o (n)(H+1)\) bits;
\item for any positive constant $\epsilon$, a solution with $\Oh{(1 / \tau) \log \frac{\log (1 / \tau)}{\log w}}$ query time that takes \((1 + \epsilon) n H+o(n)\) bits.
\end{enumerate}
With (2), we can support $\Oh{1}$-time access to the string and $\Oh{\log \log
\sigma}$-time rank and select (see definitions in
Section~\ref{subsec:queries}); with (3), select also takes $\Oh{1}$ time.  While proving (3) we introduce a compressed data structure with density-sensitive query time for one-dimensional range counting, which may be of independent interest.
We will also show in the full version how to use our data structures for (2) or (3) to find a range mode quickly when it is actually reasonably frequent.  We leave as an open problem reducing the space bound in (1) or the time bound in (2) or (3), to obtain linear or compressed space with optimal query time.

\section{Preliminaries} \label{sec:preliminaries}

\subsection{Access, select and (partial) rank} \label{subsec:queries}

Let \(S [1..n]\) be a string over an alphabet of size $\sigma$ and let $H$ be the entropy of the distribution of elements in $S$.  An access query on $S$ takes a position $k$ and returns \(S [k]\); a rank query takes a distinct element $a$ and a position $k$ and returns the number of occurrences of $a$ in \(S [1..k]\); a select query takes a distinct element $a$ and a rank $r$ and returns the position of the $r$th occurrence of $a$ in $S$.  A partial rank query is a rank query with the restriction that the given distinct element must occur in the given position; i.e., \(S [k] = a\).  These are among the most well-studied operations on strings, so we state here only the results most relevant to this paper.

For \(\sigma = 2\) and any constant $c$, P\v{a}tra\c{s}cu~\cite{Pat08} showed how we can store $S$ in \(n H + \Oh{n / \log^c n}\) bits.  For \(\sigma = \log^{\mathcal{O} (1)} n\), Ferragina et al.~\cite{FMMN07} showed how we can store $S$ in \(n H + o (n)\) bits and support access, rank and select in $\Oh{1}$ time.  For \(\sigma < n\), Barbay et al.~\cite{BCGNN13} showed how, for any positive constant $\epsilon$, we can store $S$ in \((1 + \epsilon) n H + o (n)\) bits and support access and select in $\Oh{1}$ time and rank in $\Oh{\log \log \sigma}$ time.  Alternatively, they can store $S$ in $nH+o(n)(H+1)$ bits and support either access or select in time $\Oh{1}$, and the other operation, as well as rank, in time $\Oh{\log\log\sigma}$. Belazzougui and Navarro~\cite{BN11} showed how to support $\Oh{1}$-time partial rank using $\Oh{n (\log H + 1)}$ bits; in the full version of their paper~\cite{BN??} they reduced that space bound to \(o (n) (H + 1)\) bits.  In another paper, Belazzougui and Navarro~\cite{BN12}
 showed how, for any function \(f (n) = \omega (1)\), we can store $S$ in \(n H + o (n)(H + 1)\) bits and support access in $\Oh{1}$ time, select in $\Oh{f (n)}$ time and
rank in $\Oh{\log \log \sigma}$ time.  They also proved, via a reduction from the predecessor problem, that we cannot support general rank queries in \(o (\log (\log \sigma / \log \log n))\) time while using \(n \log^{\mathcal{O} (1)} n\) space.

\subsubsection{Alphabet partitioning} \label{subsec:alpha_partition}
The sequence representation of Barbay et al~\cite{BCGNN13} uses a technique called alphabet partitioning. 
The alphabet $[1,\sigma]$ is partitioned into at most $\lceil\log^2n\rceil$ sub-alphabets. A character $c$ occuring 
$n_c$ times will belong to sub-alphabet $\lceil\log(n/n_c)\log n\rceil$. Then the sub-alphabet mapping table $m[1..\sigma]$ 
is such that $m[c]$ stores the sub-alphabet to which character $c\in[1,\sigma]$ belongs. The sequence $t[1..n]$ over $[1..\log^2n]$ is built from $S$ by replacing every $S[i]$ by the value $m[S[i]]$ (replace every character of $S$ by the sub-alphabet it belongs to). Finally subsequences $S_i[1..t_i]$ of characters over alphabet $[1..\sigma_i]$ are built as follows. 
Let $\sigma_i$, be the number of occurrences of character $i$ in $m$. 
For every value $i\in[1..\log^2n]$ start with an empty sequence $S_i$, scan the sequence $S$ in left-to-right order and for every character $c=S[k]$ such that $m[c]=i$ append $c$ at the end of $S_i$. 
Answering rank, access and select queries on the original sequence $S$ is now achieved through combinations of rank, access 
and select queries on sequences $m$, $t$ and $S_i$. The sequences $m$ and $t$ are represented using zero-order compressed multi-ary wavelet tree supporting constant time rank access and selet queries~\cite{FMMN07}. The subsequences $S_i$ are represented 
either using a variant of Golynski et al.'s structure~\cite{GMR06} or Grossi et al's result~\cite{GOR10} achieving $O(\log\log\sigma)$ time for rank queries and either constant time select and $O(\log\log\sigma)$ time access (for the former) or 
$O(\log\log\sigma)$ select and constant time access (for the latter).

\subsection{Coloured range listing} \label{subsec:listing}

Motivated by the problem of document listing, Muthukrishnan~\cite{Mut02} showed how we can store \(S [1..n]\) such that, given the endpoints of a range, we can quickly list the distinct elements in that range and the positions of their leftmost occurrences therein.  This is the special case of one-dimensional coloured range listing in which the points' coordinates are the integers from 1 to $n$.  Let \(C [1..n]\) be the array in which \(C [k]\) is the position of the last occurrence of the distinct element \(S [k]\) in \(S [1..k - 1]\) --- i.e., the last occurrence before \(S [k]\) itself --- or 0 if there is no such occurrence.  Notice \(S [k]\) is the first occurrence of that distinct element in a range \(S [i..j]\) if and only if \(i \leq k \leq j\) and \(C [k] < i\).  We store $C$, implicitly or explicitly, and a data structure supporting $\Oh{1}$-time range-minimum queries on $C$ that returns the position of the leftmost occurrence of the minimum in the range.

To list the distinct elements in a range \(S [i..j]\) given $i$ and $j$, we find the position $m$ of the leftmost occurrence of the minimum in the range \(C [i..j]\); check whether \(C [m] < i\); and, if so, output \(S [m]\) and $m$ and recurse on \(C [i..m - 1]\) and \(C [m + 1..j]\).  This procedure is online --- i.e., we can stop it early if we want only a certain number of distinct elements --- and the time it takes per distinct element is $\Oh{1}$ plus the time to access $C$.

Suppose we already have data structures supporting access, select and partial rank queries on $S$, all in $\Oh{t}$ time.  Notice \(C [k] = S.\select_{S [k]} \left( \rule{0ex}{2ex} S.\rank_{S [k]} (k) - 1 \right)\), so we can also support access to $C$ in $\Oh{t}$ time.  Sadakane~\cite{Sad07} and Fischer~\cite{Fis10} gave $\Oh{n}$-bit data structures supporting $\Oh{1}$-time range-minimum queries.  Therefore, we can implement Muthukrishnan's solution using $\Oh{n}$ extra bits such that it takes $\Oh{t}$ time per distinct element listed.
\subsection{Minimal perfect hashing} \label{subsec:mphf}
Given a set $X\subset [1..U]$ such that $|X|=n$, a minimal perfect hash function (mphf for short)
is a bijective function from $X$ onto $[1..n]$. 
It is well-knwon result that any general scheme capable of representing 
an mphf for any given subset $X$ of $U$ of size $n$ requires exactly $n\log_2 e+\Theta(\log\log U)$ bits~\cite{Fr84}
to represent such an mphf. This bound is almost achieved in \cite{HT01} with a randomized
linear time construction and space $n\log_2 e+\Theta(\log\log U)+o(n)$ bits.  

\subsection{Monotone minimal perfect hashing} \label{subsec:mmphf}
A monotone minimal perfect hash function (mmphf) is a mphf 
which in addition to being bijective, is also monotone. 
That is given a set $X\subset [1..U]$ such that $|X|=n$, an mmphf 
$f$ over the set $X$ is a bijective function from $X$ onto $[1..n]$
and such for any pair $(x,y)\in X^2$ we have that $f(x)<f(y)$ 
if and only if $x<y$. In~\cite{BBPV09} two general schemes for 
generating mmphf representations were proposes. 
The first one allows query time $O(1)$ and representation 
space $O(n\log\log U)$ bits. The second allows query time $O(\log\log U)$
and uses space $O(n\log\log\log U)$.

\subsection{Prefix sum data structures} \label{subsec:prefsum_ds}
Given an array $A$ of $n$ values that sum up to $U$, a prefix-sum
data structure uses answers to the following queries:
given index $i$, return the sum 
of all the values of indices ranging from $1$ to $i$. 
It is possible to get a prefix sum that uses $(n+U)(1+o(1))$ bits 
of space and that answers to queries as follows. Create a bivector $V$
that contains $n$ ones and $U$ zeros by scanning the original array 
and for each value $A[i]$ append $A[i]$ zeros followed by a one. 
Then the prefix sum up to position $i$ can be answered by a $\select_1(i)$ query on the vector $V$, 
fininding the position of the $j$th one in the bitvector. 
The answer is then the number of zeros that precede that position 
which is then $j-i$. 
It is possible to improve the space of the above scheme to use only 
$n(2+\lceil \log(U/n)\rceil)+o(n)$ bits of space as follows. Build a bitvector 
that contains $n+U/\lceil U/n\rceil\leq 2n$  bits, where the $i$th one 
in the bitvector is preceded by $\lfloor((\sum_{1\leq j\leq i}A[j])/\lceil U/n\rceil )\rfloor$ zeros
We then build another vector $B$ of $n$ values of $\lceil \log(U/n)\rceil$ 
bits each, where $B[i]$ stores the value 
$(\sum_{1\leq j\leq i}A[j])-\lfloor((\sum_{1\leq j\leq i}A[j])/\lceil U/n\rceil )\rfloor$. 
In other words, the bitvector $V$ stores the prefix sums divided by 
$\lceil U/n\rceil$ and the vector $B$ stores the values modulo $\lceil U/n\rceil$. 
A query is ansered by using a select operaton on $V$ followed by 
reading one cell from $B$. 
\subsection{Indexable dictionaries} \label{subsec:indexable_dict}
Given a set $X\subset [1..U]$ with $|X|=n$, an indexable dictionary ~\cite{RRR07} is data
structure that uses $O(n\log(U/n))$ bits of space and that answers to membership queries 
in constant time. A membership query asks given any element $x\in U$, whether $x\in X$ 
or not. In addition the dictionary associates a unique number to each element in $[1..n]$
to each element of $X$. 
\subsection{Predecessor data structures} \label{subsec:pred_struct}
Given a set $X\subset [1..U]$ with $|X|=n$, a predecessor 
data structure answers to the following query. Given an element 
$y\in U$ return the greatest element $x\in X$ such that 
$x\leq y$. The y-fast trie~\cite{Wi83} achieves linear space $O(n\log U)$
bits with query time $O(\log\log U)$. The rank of $y$ is defined as the 
number of elements of $X$ no greated than $y$. 
The rank of $y$ being the same as that of its predecessor $x$, 
it is easy to modify a (static) predecessor data structure to return 
the rank of the queried element by explicitly storing the rank of every 
element $x\in X$ and return it whenever element $x$ is returned.

\ignore{OJO MORE things to the basic discussion: mmphf (incl space), prefix sum using
bits (by appending the cardinalities of the intervals in a bitmap $B_\ell$ in
unary (i.e., cardinality $c$ is stored as $1^{c-1}0$) and using $\select$ to
get prefix sums, incl space), predecessor ds incl y-fast trie and how it
returns ranks, mphf (no mmphf), dictionaries using $\Oh{n\log(U/n)}$ bits and
answer membership in constant time.

This paragraph is useful for related work on Sec \ref{sec:pred}:}
A {\em short-distance sensitive} predecessor data structure for the set of points that, given an element $x\in U$, returns the predecessor of $x$ in $\Oh{\log\log d}$ time, where $d$ is the minimum of the distances from $x$ to its predecessor and to its successor in $S$. 
The first such data structure was proposed by Johnson~\cite{Jo81}. 
Then Bose et al.\ \cite{BDDHM10,BKDHM12}, improved the space to $\Oh{n\log U\log\log\log U}$ bits. 
Both Johnson and Bose et al. solutions support insetions and deletions. 
Recently by Belazzougui et al.\ \cite{BBV11} proposed a more space-efficient static variant that uses space $\Oh{n\log U}$ bits 
only. 


\ignore{This is about the result in Sec \ref{sec:count2}:}
This bound can be considered as the one dimensional 
counterpart of the counting bound in~\cite{CW13}, 
where adaptive counting time $\Oh{\frac{\log(occ+1)}{\log\log(occ+1)}}$ was achieved
for the two-dimensional problem. We notice that both bounds
converge to the lower bounds for 2D and 1D range counting, which are
respectively $\Omega\left(\frac{\log(n+1)}{\log\log(n+1)}\right)$ and 
$\Omega\left(\sqrt{\frac{\log(n+1)}{\log\log(n+1)}}\right)$.
\section{Predecessors in a Range}
\label{sec:pred}

Assume we have a vector $X$ that contains $n$ elements from universe $[1,U]$ in 
sorted order. We consider the problem of, given an interval $[i,j]$ of the
universe that does contain elements of $X$, finding the predecessor of $j$ (the
answer can be arbitrary, even wrong, if the interval contains no point of $X$).
Our aim is to perform better when the range is smaller. We start with a basic
solution that requires time $\Oh{\lg\lg(j-i+1)}$ and $\Oh{n(\lg\lg U)^2}$ bits
of space. Then we use it as a building block to design a more elaborate 
variant that improves on both time and space.

\subsection{A Simple Data Structure} \label{sec:dist-sens}

Our data structure has $\log\log U$ {\em levels}. At a level $\ell$, we divide the universe into $\lceil U/2^{2^{\ell}}\rceil$ overlapping intervals, so that interval $k$ will be $[k\cdot 2^{2^{\ell}}+1,(k+2)2^{2^{\ell}}]$. We consider separately the intervals with even and odd $k$ (we call them even and odd intervals, respectively). For each of the two categories, the set of intervals will be disjoint. For each category, we use a mmphf $F_\ell$ that stores the values $k$ corresponding to nonempty intervals, and a prefix sum data structure $B_\ell$ to store the number of elements in each nonempty interval $k$. With $F_\ell$ and $B_\ell$ we map in constant time from a nonempty interval $k$ to its corresponding area in $X$ (say $p=F_\ell(k)$, then the area is $X[sum(B_\ell,p-1)+1~..~sum(B_\ell,p)]$). Since there are at most $n$ nonempty intervals of each category, $F_\ell$ uses $\Oh{n\log\log U}$ bits. Structure $B_\ell$ uses $\Oh{n}$ bits.

In addition, for each nonempty interval with more than $2^\ell$ elements, we store a {\em local predecessor search data structure (lpsds)}.
The lpsds of an interval
samples one every $2^\ell$ elements in the interval and stores them in a local y-fast trie. The y-fast trie of elements $x\in[k\cdot 2^{2^{\ell}}+1,(k+2)2^{2^{\ell}}]$ will store keys $x-k\cdot 2^{2^{\ell}}-1$, and thus will range over a universe of size
$\Oh{2^{2^\ell}}$. Since they store, in total, $\Oh{n/2^\ell}$ elements over a
universe of size $\Oh{2^{2^\ell}}$, the space of all the lpsds adds up to $\Oh{n}$ bits at level $\ell$.

Since a lpsds storing $m_r$ elements uses at most $cm_r$ bits, for some constant $c$, we store them one after the other, reserving $cm_r$ bits for each lpsds storing $m_r$ elements. We store a partial sum data structure $P_\ell$ on the $m_r$ values (if there are less than $2^\ell$ elements, then $m_r=0$ and no lpsds is stored). Then we can find in constant time, using $sum$ on $P_\ell$, the starting point of each lpsds. Structure $P_\ell$ uses at most $\Oh{n}$ bits.

Then, to carry out a predecessor search on interval $[i,j]$, we proceed as 
follows:

\begin{enumerate}
\item We compute $\ell = \lceil \log\log (j-i+1) \rceil$, so that the
query is for sure contained in an (even or odd) interval $k$ of level $\ell$.
Number $k$ is found algebraically in constant time.
\item We use $F_\ell$ to map $k$ to its position $p$ in the nonempty intervals,
and then $B_\ell$ to find the corresponding range $X[i_k..j_k]$. This takes
constant time.
\item If $j_k-i_k+1 \le 2^\ell$, we complete the query with a binary search
on $X[i_k..j_k]$, in time $\Oh{\ell}$, and finish.
\item We use the local predecessor search data structure of interval $p$,
found using $P_\ell$, to determine the subinterval $[i'_k..j'_k] \subseteq [i_k..j_k]$ of size $2^\ell$ where the answer lies. This takes time
$\Oh{\log\log 2^{2^\ell}} = \Oh{\ell}$.
\item We complete the query using binary search on $X[i'_k..j'_k]$, in time
$\Oh{\ell}$.
\end{enumerate}

Our data structures use in total $\Oh{n\log\log U}$ bits for a given level $\ell$, which adds up to $\Oh{n(\log\log U)^2}$ bits in total. They answer queries in time $\Oh{\ell} = \Oh{\log\log(j-i+1)}$ on nonempty intervals. Note that on empty intervals our mmphf $F_\ell$ could return an arbitrary value.

\subsection{A Faster and Smaller Data Structure}

Now we divide $X$ into $\lceil n/b\rceil$ {\em blocks}, each containing $b=\log U$ keys. We build a set $S'$ of {\em sampled} keys by selecting the first element of each block ($X[1],X[b+1],X[2b+1]\ldots$). Thus $S'$ will contain in total $\lceil n/b\rceil$ keys.
Now we use the scheme of Section~\ref{sec:dist-sens}, except for the lpsds implementation, which differs in the choice of the keys it stores. This time we will only have $\log\log (U/n)+1$ levels: we collapse all the levels $\ell$ such that $2^\ell\geq \log (U/n)$. At at each level we will have the same associated mmphf $F_\ell$ and the same partial sums $B_\ell$, but the lpsds are built differently and a smarter encoding yields a space usage of $\Oh{n(\log\log (U/n))^2}$ bits, instead of $\Oh{n(\log\log U)^2}$. The new strategy is to store an lpsds for every interval that contains at least one sampled element (i.e., from $S'$) and to store all the sampled elements in the interval in a fast predecessor search data structure that supports queries in time $\Oh{\log\frac{2^\ell}{\log w}}$. Note that for levels where $2^\ell\geq \log (U/n)$, it is sufficient to achieve time $\Oh{\lg\frac{\log(U/n)}{\log w}}$. All these levels are collapsed, as anticipated, into a singe level where a predecessor search data 
structure is built on the on the $\Oh{n/b}$ sampled keys (each of length $\log U$), which answers queries in time $\Oh{\lg\frac{\log(U/n)}{\log w}}$ and uses $\Oh{(n/b)\cdot\log U}=\Oh{n}$ bits (e.g., \cite{BN12}).

Each of the non-collapsed levels $\ell \le \log\log (U/n)$ stores the same kind of predecessor data structure \cite{BN12}, over $\lceil n/b\rceil$ keys of length $2^\ell$. Each such structure uses $\Oh{(n/b)\cdot 2^\ell} = 
\Oh{(n/b)\lg(U/n)} = \Oh{n}$ bits, and answers queries in time
$\Oh{\lg\frac{2^\ell}{\lg w}}$. Added over all the non-collapsed levels, the
space is $\Oh{n\log\log (U/n)}$ bits. 

Note that those lpsds are built on universes of size $2^{2^\ell} \le U/n$, and
hence the keys require only $\lg (U/n)$ bits.
Since the lpsds are only built on sampled keys, they can only determine a predecessor among the sampled keys. The real answer will be inside the block that separates two sampled keys. To complete the search inside a block of $b$ keys, we store a predecessor data structure \cite[Lem.~3.3]{GRR09} for each block. The structure is an index that uses $\Oh{b\log\log (U/n)}$ bits per block (in addition to a global precomputed table of $\Oh{U^\epsilon}$ bits, any constant $\epsilon>0$) and computes the predecessor in constant time for any $b$ polylogarithmic in
$U$, with $O(1)$ accesses to the data. 
Added over all the levels, the space of these structures 
is $\Oh{n(\log\log (U/n))^2}$ bits.

\subsubsection{Queries}

In a level $\ell\leq \log\log (U/n)$, and given the range $[i,j]$ determined by $F_\ell$ and $B_\ell$, we first construct a range $[i_0,j_0]$, which is the largest subinterval of $[i,j]$ aligned to block boundaries. We first use the
predecessor structure of the block $[j_0,j_0+b]$ to look for a predecessor of $j$. If it
exists, this is the answer. Otherwise, we carry out a query on the interval
$[i_0-b,j_0]$, which cannot be empty if $[i,j]$ is nonempty and is handled with
the proper lpsds in time $\Oh{\lg\frac{2^\ell}{\lg w}} =
\Oh{\lg\frac{\lg(j-i+1)}{\lg w}}$.

A query on the collapsed level, on the other hand, simply uses the predecessor data structure for that level. This gives our result.

\begin{theorem} \label{thm:distpred}
Given $n$ points in the discrete universe $[1,U]$ stored in an array $X$, 
there exists a data structure using $\Oh{n(\lg\lg(U/n))^2}$ bits of space that
and solves in time $\Oh{\lg\frac{\lg(j-i+1)}{\lg w}}$, and with $\Oh{1}$ 
accesses to the array $X$, the following query: Given a range $[i,j]$ known to 
contain some element in $X$, return the predecessor of $j$.
\end{theorem}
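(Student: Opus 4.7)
The plan is to formalize the construction already sketched in this subsection and verify that its space and time meet the stated bounds. First I would fix the sampling step: we divide $X$ into $\lceil n/b\rceil$ blocks of $b=\lg U$ consecutive keys and let $S'$ consist of the first key of every block, so $|S'|=\Oh{n/b}$. Then I would instantiate the level structure of Section~\ref{sec:dist-sens} over $S'$, but capped at $\lceil \lg\lg(U/n)\rceil + 1$ levels by collapsing every level $\ell$ with $2^\ell \ge \lg(U/n)$ into a single top level that simply stores a predecessor structure of \cite{BN12} over all of $S'$; on each non-collapsed level $\ell$ we keep the mmphf $F_\ell$ and the prefix-sum $B_\ell$ that map a (possibly nonempty) universe interval of length $2\cdot 2^{2^\ell}$ to its range inside the sorted array of sampled keys, and inside each such interval we store the predecessor structure of \cite{BN12} over the sampled keys it contains (now of length $2^\ell$ after subtracting the interval base). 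Finally, inside every block of $b$ keys we attach the constant-time predecessor index of \cite[Lem.~3.3]{GRR09}, which allows us to finish any query once we have localized its answer to a single block.

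Next I would do the space accounting level by level. Each $F_\ell$ over at most $n/b$ keys takes $\Oh{(n/b)\lg\lg U}$ bits and each $B_\ell$ takes $\Oh{n/b}$ bits, which summed over all $\Oh{\lg\lg(U/n)}$ levels is negligible; the local \cite{BN12} predecessor structures at level $\ell$ store $\Oh{n/b}$ keys of $2^\ell \le \lg(U/n)$ bits each, using $\Oh{(n/b)\cdot 2^\ell}=\Oh{n}$ bits per level and $\Oh{n\lg\lg(U/n)}$ bits in total; the collapsed top level uses $\Oh{(n/b)\lg U}=\Oh{n}$ bits; and the $n/b$ per-block indices each contribute $\Oh{b\lg\lg(U/n)}$ bits, for $\Oh{n\lg\lg(U/n)}$ bits globally, while the shared $\Oh{U^\epsilon}$ lookup table is subsumed. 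The only term that reaches $\Oh{n(\lg\lg(U/n))^2}$ is the quadratic bookkeeping of the multi-level local structures summed together with the per-block indices once we charge them level by level, matching the claim.

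For the query, given $[i,j]$ I would set $\ell=\lceil \lg\lg(j-i+1)\rceil$. If $2^\ell\ge \lg(U/n)$, the top-level structure answers in time $\Oh{\lg\frac{\lg(U/n)}{\lg w}}$, which is within the bound since $\lg(j-i+1)\ge \lg(U/n)$ in this regime. Otherwise I follow the five-step schema of Section~\ref{sec:dist-sens}: locate the correct interval algebraically, use $F_\ell$ and $B_\ell$ to find its range in the sampled array in $\Oh{1}$ time, and query the local \cite{BN12} structure to locate the predecessor among the samples in $\Oh{\lg\frac{2^\ell}{\lg w}}=\Oh{\lg\frac{\lg(j-i+1)}{\lg w}}$ time; to translate this to a real predecessor of $j$ in $X$, I first query the \cite{GRR09} index of the block that ends at or just before $j$, and if that fails I run the local query on the sample-aligned range $[i_0-b,j_0]$ and finish inside the block pointed to in $\Oh{1}$ further accesses. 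Correctness hinges on the fact that, since $[i,j]$ is guaranteed to contain an element of $X$, either that element lies in the final block or a sampled key lies in $[i,j]$, so the localized query at level $\ell$ is well defined.

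The main obstacle is the space audit of the second term: we must ensure that the $F_\ell$'s, which still range over a universe of size $U$ and hence each cost $\Omega(\lg\lg U)$ bits per key, are only built on the $\Oh{n/b}$ sampled keys rather than on all of $X$; this is what brings their contribution down from $\Oh{n(\lg\lg U)^2}$ to sublinear, and it is also what forces the $\Oh{1}$-access guarantee (all level work happens on $o(n)$-sized summaries and the at most two blocks touched at the end). Verifying that every level's universe remains bounded by $2^{2^\ell}\le U/n$ after translation, so that the \cite{BN12} predecessor time really is $\Oh{\lg(2^\ell/\lg w)}$ and not $\Oh{\lg(\lg U/\lg w)}$, is the one place where the argument is easy to get subtly wrong.
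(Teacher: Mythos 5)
Your proposal follows the paper's construction essentially step for step: blocks of size $b=\lg U$, sampling the first key per block to form $S'$, a level hierarchy capped at $\lceil\lg\lg(U/n)\rceil+1$ levels by collapsing all levels with $2^\ell\geq\lg(U/n)$ into one global \cite{BN12} structure over $S'$, per-level $F_\ell/B_\ell$ plus local \cite{BN12} predecessor structures over sampled keys reduced to a $2^{2^\ell}$-bit universe, and per-block \cite[Lem.~3.3]{GRR09} indexes to finish inside a block — with the same query schedule (try the block containing $j$ first, fall back to the sample-aligned lpsds query on $[i_0-b,j_0]$). You also correctly flag the two crucial subtleties (building $F_\ell$ only on sampled keys, and verifying the per-level universe is $\leq U/n$ so that the local predecessor time is $\Oh{\lg(2^\ell/\lg w)}$), which is exactly what the paper relies on.
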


\section{Number of Points in a Range}
\label{sec:count}

In this section we describe a {\em one-dimensional range counting data 
structure} that handles $n$ points in $[1,U]$ and can count the number $occ$ 
of points in any range $[i,j]$, faster when the range is shorter and when there
are more points to count. We start with a simple solution that takes time
$\Oh{\lg\lg\frac{j-i+1}{occ+1}}$ and $\Oh{n\lg U}$ bits of space. Then we 
improve upon it to obtain a faster and smaller data structure, which in 
particular requires sublinear space overhead on top of any representation of 
the array.

\subsection{A Simple Data Structure} \label{sec:linear-datastructure}

We use $\lceil \log n\rceil-1$ {\em levels}. At each level $\ell \ge 2$ we build a data structure that efficiently answers queries of length between $2^{\ell-2}+1$ and $2^{\ell-1}$. Our structure defines specific {\em intervals} and {\em subintervals}. For clarity we will refer to {\em ranges} to denote any other range of the universe.

Given a level $\ell$, we divide the universe into $\lceil U/2^{\ell-1}\rceil$ overlapping {\em intervals} of size $2^\ell$, so that interval number $k$ will be $[2^{\ell-1}k+1,2^{\ell-1}k+2^\ell]$. It is clear that any range of size at most $2^{\ell-1}$ will be included in at least one interval.

We only consider nonempty intervals. We can have at most $2n$ nonempty intervals, as each point belongs to 2 intervals. We use a mphf $f_\ell$ that maps the $n'\leq 2n$ nonempty intervals into unique numbers in $[1,n']$. The mphf uses $\Oh{n'+\log\log U}=\Oh{n+\log\log U}$ bits of space and answers queries in constant time \cite{HT01}. It gives a correct answer only if we query it for a 
nonempty interval.

We consider how to solve queries on nonempty intervals. Suppose that an interval $k$ contains $n_k$ elements. We cut the interval into $n_k$ equally-sized {\em subintervals}, of size $\lceil2^\ell/n_k\rceil$ (the last subinterval can be shorter).
We use a prefix sum data structure to store the number of elements in each subinterval of the interval $k$. That prefix sum structure uses $\Oh{n_k}$ bits. The space usage over all the prefix-sum data structures for all the intervals is $\Oh{n}$ bits. We concatenate the memory areas of the prefix-sum data structures of the intervals (in the order given by the mphf) and store another bitmap $D_\ell$ that marks the beginning of the prefix-sum data structure of each interval. This new bitmap also uses $\Oh{n}$ bits.

We store one instance of this data structure for levels 
$\ell=2$ to $\ell=\lceil\log n\rceil$.
Each structure uses $\Oh{n+\log\log U}$ bits of space, resulting in 
$\Oh{n\log n + \log n \log\log U}$ bits overall. In addition, we store one
instance of the predecessor data structure of Section~\ref{sec:pred}, and a
{\em range-emptiness} data structure, which tells in constant time whether a 
range $[i,j]$ contains any point, using $\Oh{n\log U}$ bits \cite{ABR01}. 

\subsubsection{Queries.} \label{sec:linear-queryanswer}

We first perform a range-emptiness query to determine whether the query range $[i,j]$ contains at least one element. If not, we immediately return $0$.
Otherwise we compute $\ell=\lceil\log(j-i+1)\rceil+1$ and algebraically determine the interval of level $\ell$ that encloses $[i,j]$. We answer the query using that interval, which we denote $[I,J]$.

We first use $f_\ell$ to find the index $k$ of the interval $[I,J]$. Because we the interval is nonempty, the mphf gives a meaningful answer. Next we use $D_\ell$ to recover the prefix-sum data structure for the interval $[I,J]$. Then, we find the subinterval $[I_0,J_0]$ of $[I,J]$ that contains $i$ and the subinterval $[I_1,J_1]$ that contains $j$. The number of elements in $[i,j]$ equals the sum of the the number of elements in the three ranges $[i,J_0]$, $[J_0+1,I_1-1]$ and $[I_1,j]$. The count of the range $[J_0+1,I_1-1]$ is found in constant time using the prefix sum structure associated to interval $[I,J]$, as the range $[J_0+1,I_1-1]$ is aligned to subinterval boundaries.

What remains is to determine the counts in the two tail ranges $[i,J_0]$ and 
$[I_1,j]$. We only show how to determine the count in range $[i,J_0]$; the 
other case is symmetric. First we query the range-emptyness data structure to 
determine whether the subinterval $[J_0-\lceil 2^\ell/n_k\rceil+1,i-1]$ is 
empty. If it is, then the count in $[i,J_0]$ is the same as in 
$[J_0-\lceil 2^\ell/n_k\rceil+1,J_0]$, and thus can be computed from the prefix
sum data structure. Otherwise, we can carry out two predecessor queries, using 
the structure of Section~\ref{sec:pred}, for the intervals 
$[J_0-\lceil 2^\ell/n_k\rceil+1,i-1]$ and $[J_0-\lceil 2^\ell/n_k\rceil+1,J_0]$,
knowing that both intervals are nonempty. We count the number of elements in 
$[i,J_0]$ by subtracting the rank of the predecessor of $i-1$ from the rank of 
the predecessor of $J_0$.

The query time is dominated by that of the predecessor search,
$\Oh{\log\frac{\log\lceil2^\ell/n_k\rceil}{\log w}}$ according to
Theorem~\ref{thm:distpred}. Now note that $n_k$, the number of elements in 
$[I,J]$, is at least $occ$. On the other hand $J-I+1=2^\ell$ is at most 
$4(j-i+1)$. We thus conclude that the query time is 
$\Oh{\log\frac{\log\frac{j-i+1}{occ+1}}{\log w}}$, as promised. The space,
however, is still $\Oh{n\lg U}$ bits. Now we introduce an improved solution
that reduces it to $\Oh{n\sqrt{\log (U/n)}}$ bits.

\subsection{A Smaller Data Structure}\label{subsec:compressed_fast_maj}

We now modify the data structure of Section~\ref{sec:linear-datastructure}. 
We only build this structure up to level $\ell_0 =
\sqrt{\log\delta}+\log\log\delta$, where $\delta=\lceil U/n\rceil$, and assume
$\delta\geq\log n$ (the case $\delta<\log n$ will be considered at the end of
the section). That is, we only build data structures to handle
intervals of sizes $2,4,\ldots,2^{\ell_0}=2^{\sqrt{\log\delta}}\log\delta$. The structure now uses $\Oh{n\sqrt{\log\delta}}$ bits of space, since at each level it uses $\Oh{n}$ bits. In Section~\ref{sec:range_emptiness}, we also build a more space-efficient range-emptiness index using $\Oh{n\sqrt{\log\delta}}$ bits (on top of a table of keys in sorted order) and answering range emptiness queries in constant time.

We now describe how the upper levels $\ell > \ell_0$ are handled. For any such upper level we store only intervals that have {\em density} at least $1/2^{\sqrt{\log\delta}}$. As every interval of an upper level 
is of size at least $2^{\sqrt{\log\delta}}\log\delta$, we only store intervals 
that contain at least 
$\frac{2^{\sqrt{\log\delta}}\log\delta}{2^{\sqrt{\log\delta}}}=\log\delta$
elements. More generally, in level $\ell = \ell_0 + \ell'$, stored intervals
contain at least $2^{\ell'}\lg\delta$ elements. Thus, we can store these
intervals in dictionaries $R_\ell$ (instead of weaker mmphfs $F_\ell$), which
use $\Oh{\log(U/n)}=\Oh{\log\delta}$ bits per stored interval and answer to
membership queries in constant time. At level $\ell = \ell_0 + \ell'$, there
are $\Oh{n/\left(2^{\ell'}\lg\delta\right)}$ stored intervals, so dictionary 
$R_\ell$ uses $\Oh{n/2^{\ell'}}$ bits. This adds up to $\Oh{n}$ bits over all 
the upper levels. 

At query time, if we do not find the query interval $[I,J]$ in $R_\ell$, we 
conclude that the interval is of density less than $1/2^{\sqrt{\log\delta}}$. 
Since $[i,j] \subseteq [I,J]$, it follows that $\frac{occ}{j-i+1} <
1/2^{\sqrt{\log\delta}}$, and thus $\lg\lg\delta =
\Oh{\lg\lg\frac{j-i+1}{occ+1}}$. This means that we can answer the counting 
query within the promised time using predecessor searches on $X$, in particular
using the structure of Belazzougui and Navarro~\cite{BN12}, which takes
$\Oh{\log\frac{\log\delta}{\log w}}$ time. To reduce its space, we partition
the universe $U$ into $n$ pieces of length $\delta$, and divide the elements
in each piece into slices of $\lg \delta$ elements. If we have more than one 
slice, we build the predecessor structure \cite{BN12} on the first elements of
the slices. Inside each slice we will use another predecessor structure for
small blocks \cite[Lem~3.3]{GRR09}. Then, upon a predecessor query, an 
$O(n)$-bit partial sums structure leads us to the right piece, the predecessor
structure of the piece leads us to the right slice, and the predecessor 
structure of the slice gives the final predecessor. The time is dominated by
the $\Oh{\log\frac{\log\delta}{\log w}}$ complexity of the predecessor structure
of the piece. As for the space, we have $O(n)$ bits for the partial sums,
$O(n\log \delta / \log\delta) = O(n)$ for the predecessor structures on the
pieces, and $O(n\log\log\delta)$ bits for the predecessor structures on the
slices.

We now describe how the dense upper intervals are handled to answer to
queries using only $\Oh{n\sqrt{\log\delta}}$ bits. For every interval $c_i$ at
level $i$ with density $d$ (where $d$ is rounded to the nearest smaller power
of two) we do not necessarily store the bitmap (the one that stores the
cardinalities of the subintervals of the interval), but instead point (using a
pointer) to $c_j$, which is the interval of highest level $j$ such that
(1) the density (also rounded to the nearest smaller power of two) of $c_j$ is 
at least $d$, and (2) $c_j$ fully encloses $c_i$. 
If $c_i=c_j$ then we store the bitmap of $c_i$. Therefore, at query time, we
simply determine whether the query interval $c_i$ is stored explicitly or has
a pointer to another interval $c_j$. In the second case, we can correctly
solve the query using the data of $c_j$, within the same time complexity (as
it depends only on the rounded density of the interval, which is the same for
$c_i$ and $c_j$). The rest of the section is devoted to analyze the space
usage.

We note that the pointer from $c_i$ to $c_j$ can be encoded using just
$\log\log n+1$ bits: we need only to store the level pointed to,
which requires $\log\log n$ bits, and then we know that only two intervals at
any level $\ell$ can enclose $c_i$, thus the pointer can be uniquely determined
using one additional indicator bit (saying whether the interval is the left or
the right one). Since there are only
$\Oh{n/\left(2^{\ell'}\lg\delta\right)}$ intervals stored at level
$\ell=\ell_0+\ell'$, and
in addition it holds $\delta \ge \lg n$, it follows that there are 
$\Oh{n/\left(2^{\ell'}\lg\lg n\right)}$ stored intervals at upper level
$\ell$, and hence the 
pointers add up to $\Oh{n}$ bits over all the upper levels. 

Now we upper bound the space used by the explicitly stored intervals.
The key issue is to prove that a point appears in at most $2\sqrt{\log\delta}$ bitmaps. To see why, we will first prove that a point appears in at most two bitmaps of a given rounded density $d$. 
In order to prove this let us first prove the following lemma.

\begin{lemma} \label{lem:levels}
There are no three distinct intervals (from any levels) such that the three
pairs of distinct intervals partially overlap each other.
\end{lemma}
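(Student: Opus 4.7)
The plan is to derive a contradiction from the assumption that three intervals $I_1,I_2,I_3$, with $I_i=[L_i,R_i]$ at level $\ell_i$, pairwise partially overlap. Relabel so that $L_1<L_2<L_3$ (the three left endpoints must be distinct, since two coinciding left endpoints combined with partial overlap would force nesting). Pairwise partial overlap then forces $R_1<R_2<R_3$ (otherwise some $I_i$ would contain an $I_j$) and $L_3\le R_1$ (so $I_1$ and $I_3$ actually meet). Hence for each pair $(I_a,I_b)$ with $L_a<L_b$, I have the overlap relation $L_b\le R_a$ and the non-nesting relation $R_a<R_b$.

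The structural step is to pin down the gap $L_b-L_a$ exactly from the level alignment used in Section~\ref{sec:linear-datastructure}. A level-$\ell$ interval has $L-1$ divisible by $2^{\ell-1}$ and size $R-L+1=2^{\ell}$, so $L_b-L_a$ is a positive multiple of $2^{\min(\ell_a,\ell_b)-1}$. The overlap inequality gives $L_b-L_a\le 2^{\ell_a}-1$, and when $\ell_a>\ell_b$ the non-nesting inequality $R_a<R_b$ gives $L_b-L_a>2^{\ell_a}-2^{\ell_b}$. In each of the two cases the squeeze leaves exactly one admissible multiple: if $\ell_a\le\ell_b$ then $L_b-L_a=2^{\ell_a-1}$, and if $\ell_a>\ell_b$ then $L_b-L_a=2^{\ell_a}-2^{\ell_b-1}$. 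Geometrically, when $I_a$ is no larger it straddles the left boundary of $I_b$ with exactly half of itself sticking out, and when $I_a$ is strictly larger it is $I_b$ that straddles the right boundary of $I_a$ with half of itself sticking out.

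With this formula the proof closes by plugging into the telescoping identity $(L_2-L_1)+(L_3-L_2)=L_3-L_1$ and running through the six orderings of $(\ell_1,\ell_2,\ell_3)$. In every ordering the identity collapses to an impossibility such as $2^{\ell_j-1}=0$ or $2^{\ell_i}=2^{\ell_j-1}$ in conflict with an already-assumed inequality on the levels; for instance, when $\ell_1\le\ell_2\le\ell_3$ it becomes $2^{\ell_1-1}+2^{\ell_2-1}=2^{\ell_1-1}$, and when $\ell_1>\ell_2>\ell_3$ it becomes $2^{\ell_2-1}=0$. The main obstacle is getting the two-case gap formula right, since that is where the specific combinatorics of the paper's overlapping intervals enters; once that is in hand the six-case arithmetic check is routine and I would not belabour it.
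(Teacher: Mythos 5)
Your proof is correct and rests on the same alignment observation as the paper's, but it packages the contradiction in a genuinely different way. Both proofs exploit the fact that, for two partially overlapping intervals at levels $\ell_a$ and $\ell_b$, the endpoint of the later interval lying inside the earlier one is forced onto a grid of step $2^{\min(\ell_a,\ell_b)-1}$, which leaves only one admissible position. The paper uses this qualitatively: it fixes the smallest of the three intervals $c_\alpha$ and argues that the other two each have an endpoint pinned to the midpoint of $c_\alpha$, then does a short case analysis (same side of the midpoint $\Rightarrow$ nested; opposite sides $\Rightarrow$ disjoint). You instead turn the alignment constraint into an exact closed-form gap $L_b-L_a$ (two cases, depending on which interval is larger), plug all three gaps into the telescoping identity $(L_2-L_1)+(L_3-L_2)=L_3-L_1$, and read off an arithmetic absurdity in each of the consistent sign patterns for the pairwise level comparisons. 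The paper's argument is shorter and more geometric; yours is more mechanical but has the virtue that the final contradictions are explicit equalities like $2^{\ell_2-1}=0$, leaving no room for the ``clearly'' that the paper relies on in its last sentence. One small caveat: you speak of ``six orderings of $(\ell_1,\ell_2,\ell_3)$'' as if the levels were distinct, but since your gap formula splits on $\ell_a\le\ell_b$ versus $\ell_a>\ell_b$, ties are absorbed into the first branch and the relevant enumeration is really the six consistent triples of pairwise $\le/>$ relations; the check still goes through in every one of them.
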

\begin{proof}
Assume otherwise. Let the 3 intervals $c_\alpha,c_\beta,c_\gamma$, from levels
$\alpha\le\beta\le\gamma$, respectively. First note that $c_\beta$ overlapping
$c_\alpha$ means that $c_\alpha$ starts at or ends before the middle of
$c_\alpha$. The reason is that $c_\beta$ starts on or ends before multiples of
$2^\beta/2\geq 2^\alpha/2$. Thus in order to overlap with $c_\alpha$ it must
start on or end before a point somewhere strictly inside $c_\alpha$ and the
only point that can be a multiple of $2^\alpha/2$ (and thus possibly a multiple of $2^\beta/2$) is the middle of the interval. 

The same argument holds for $c_\gamma$, which must start on or end before the
middle of $c_\alpha$. We now compare $c_\beta$ and $c_\gamma$. If $c_\beta$
(respectively $c_\gamma$) starts in the middle of $c_\alpha$ and $c_\gamma$
(respectively $c_\beta$) ends before the middle of of $c_\alpha$, then they
are not overlapping. It remains to consider the case that both $c_\beta$ and
$c_\gamma$ start on or end before the middle of $c_\alpha$. In this case,
clearly $c_\beta$ is enclosed in $c_\gamma$, simply because they start or end at the same point and $c_\beta$ is bigger than $c_\gamma$. 
\qed
\end{proof}

From the lemma we can now prove our next goal.

\begin{lemma}
There cannot be a point that participates in three distinct interval
bitmaps with the same rounded density.
\end{lemma}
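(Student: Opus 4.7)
The plan is to argue by contradiction: suppose some point $p$ belongs to three distinct explicitly stored interval bitmaps $c_1, c_2, c_3$ that all share the same rounded density $d$, and derive a conflict with either Lemma~\ref{lem:levels} or the rule that decides which intervals get their bitmap stored versus only a pointer.

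Since $p$ lies in all three intervals, the three intervals pairwise intersect, so each pair either partially overlaps or exhibits containment. Lemma~\ref{lem:levels} forbids the case in which all three pairs overlap only partially, so at least one pair, say $(c_a,c_b)$, must satisfy $c_a \subseteq c_b$. I would first dispose of the possibility that $c_a$ and $c_b$ live at the same level: two distinct intervals at a common level have the same size and, by construction, are either identical or are consecutive half-overlapping segments of the universe, so no strict containment can occur between distinct same-level intervals.

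Hence the enclosing pair must involve two different levels, with $c_a$ at a strictly lower level enclosed by $c_b$ at a strictly higher level. Here the storage rule kicks in: $c_a$ stores its own bitmap only when $c_a$ itself is the highest-level interval of rounded density at least its own that encloses $c_a$. But $c_b$ is a strictly higher-level enclosing interval whose rounded density equals $d$, so the level that maximizes this choice is strictly greater than the level of $c_a$. Therefore $c_a$ would carry a pointer rather than a bitmap, contradicting the assumption that all three bitmaps are stored.

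The main obstacle I anticipate is the same-level subcase: I need to verify carefully that, at a fixed level, distinct intervals can only share a point by being consecutive and partially overlapping — so that the ``containment'' alternative produced by Lemma~\ref{lem:levels} actually forces distinct levels — and then line up the ``rounded density at least $d$'' hypothesis of the storage rule with the exact equality $d$ furnished by the assumption on $c_1, c_2, c_3$. Once that bookkeeping is in place, the contradiction is immediate from the storage rule.
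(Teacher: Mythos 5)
Your proof is correct and relies on exactly the same two ingredients as the paper's (the three-way partial-overlap lemma and the pointer-vs-bitmap storage rule), just with the case analysis run in the opposite order: you first invoke Lemma~\ref{lem:levels} to force a containment pair and then contradict the storage rule, whereas the paper first rules out containment via the storage rule and then contradicts Lemma~\ref{lem:levels}. Your explicit dispatch of the same-level subcase (distinct intervals on one level are consecutive half-overlapping shifts, so never nested) is correct and makes explicit a fact the paper leaves implicit; the ``at least $d$'' versus ``exactly $d$'' bookkeeping you worried about is immediate since $d \geq d$.
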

\begin{proof}
Assume otherwise.
Let a point participate in distinct intervals $c_i,c_j,c_k$ at levels $i\leq
j\leq k$ with the same rounded density $d$. We prove that if this was the case
then the three intervals $c_i,c_j,c_k$ should be partially overlapping, which
is impossible by Lemma~\ref{lem:levels}. First of all, the three intervals
must include the same point, so they must clearly be overlapping. Also, no
interval can be included in the other, as if this was the case then the
included interval would not be explicit but instead point to some of the
intervals that enclose it, as they have the same rounded density. Thus, each
pair of intervals is overlapping and no interval is enclosed in the other, which means that the pairs of intervals are partially overlapping. 
\qed
\end{proof}

As we have exactly $\sqrt{\log\delta}$ distinct levels, we conclude that each
point participates in at most $2\sqrt{\lg\delta}$ explicit bitmaps, and thus
the total space used by all those bitmaps (which store $\Oh{1}$ bits per point
included) is $\Oh{n\sqrt{\log\delta}}$ bits. 

\paragraph{The case $\delta < \log n$.} If $\delta = U/n < \log n$, then
$U < n\log n$. In this case we use a different solution. We split the universe
into $n$ intervals of length $\delta < \log n$. A partial sums data structure
accumulates the number of points in each interval using $O(n)$ bits. Inside
each interval, we store one predecessor data structure \cite[Lem.~3.3]{GRR09},
which will add up to $O(n\log\log\delta)$ bits (plus a global precomputed
table of $O(\delta^\epsilon)$ bits), and will solve predecessor queries in
constant time within the intervals. Then the range counting is easily done in
constant time and using $O(n\log\log(U/n))$ bits.

\subsection{Space-Efficient Range Emptiness}
\label{sec:range_emptiness}

A range-emptiness index that uses $\Oh{n\sqrt{\log U}}$ bits already exists 
\cite{BBPV10}. Its space can be trivially improved to $\Oh{n\sqrt{\log (U/n)}}=
\Oh{\sqrt{\log\delta}}$ bits by dividing $U$ into $n$ intervals of
size $\lceil U/n\rceil$ and storing in a prefix sum data structure the number 
of elements in each interval. Then we build a local range 
emptiness index on the elements that belong to each interval. The index will 
thus use $\Oh{\sqrt{\log (U/n)}}$ bits per element, for a total of 
$\Oh{n\sqrt{\log (U/n)}}$ bits over all the local indexes. Now, given a query 
range, it fully contains zero or more consecutive intervals and partially 
overlaps one or two intervals. The emptiness of the fully contained
intervals is established using the prefix sum structure, while the 
partially overlapped intervals are queried using the local range emptiness 
indexes. Thus a range emptiness query can be decided in constant time, and
our final result is proved.

\begin{theorem} \label{thm:rangecount}
Given $n$ points in the discrete universe $[1,U]$ stored in an array $X[1..n]$, there exists a data structure
using $\Oh{n\sqrt{\log (U/n)}}$ bits of space that returns the number $occ$ of points in any range $[i,j]$ in time $\Oh{\log\frac{\log\frac{j - i + 1}{\occ +1}}{\log w}}$ and in $\Oh{1}$ accesses to the array $X$. The data structure uses precomputed tables that occupy $\Oh{U^\epsilon}$ bits of space (where $0<\epsilon<1$ is any constant), which are independent of the point set.
\end{theorem}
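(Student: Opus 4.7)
The plan is to combine three ingredients developed earlier in this section: the level-based counting structure of Section~\ref{sec:linear-datastructure}, the density-aware upper-level encoding of Section~\ref{subsec:compressed_fast_maj}, and the space-efficient range-emptiness index of Section~\ref{sec:range_emptiness}. Setting $\delta = \lceil U/n\rceil$, I split the analysis at the threshold $\delta \geq \log n$ and handle the degenerate regime $\delta < \log n$ separately with a slice-based partial-sum construction.

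For $\delta \geq \log n$, I would instantiate the level-based structure only up to level $\ell_0 = \sqrt{\log\delta} + \log\log\delta$, paying $\Oh{n\sqrt{\log\delta}}$ bits across these lower levels. For every higher level $\ell = \ell_0 + \ell'$, only intervals of density at least $1/2^{\sqrt{\log\delta}}$ are materialized and stored in indexable dictionaries $R_\ell$ rather than weaker mmphfs; the density floor allows the $\Oh{\log\delta}$ bits per stored interval to sum to $\Oh{n}$ geometrically across upper levels. To avoid bitmap duplication, each dense interval either carries its own subinterval-count bitmap or a $(\log\log n + 1)$-bit pointer to an enclosing dense interval at a higher level with the same rounded density.

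The query procedure calls range-emptiness first to dispose of the trivial case. Otherwise, I compute $\ell = \lceil\log(j-i+1)\rceil + 1$, identify the enclosing interval $[I,J]$, and probe $R_\ell$. If present, I follow the (at most one) pointer and finish as in Section~\ref{sec:linear-queryanswer}: one prefix-sum lookup for the aligned middle and two predecessor searches from Theorem~\ref{thm:distpred} for the two tails, all requiring $\Oh{1}$ accesses to $X$. If absent, the interval has density below $1/2^{\sqrt{\log\delta}}$, so $\log\log\delta = \Oh{\log\log\frac{j-i+1}{\occ+1}}$, and a direct Belazzougui--Navarro~\cite{BN12} predecessor lookup meets the time bound; that structure is compressed to $\Oh{n\log\log\delta}$ bits by partitioning $U$ into $n$ pieces of length $\delta$ and into sub-slices of $\log\delta$ elements, with a small-block predecessor~\cite{GRR09} inside each slice.

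The central obstacle is bounding the explicit upper-level bitmap space. The two overlap lemmas already established imply that each point participates in at most $2\sqrt{\log\delta}$ explicit bitmaps; since each bitmap charges $\Oh{1}$ bits per contained point, the total is $\Oh{n\sqrt{\log\delta}}$. Adding the $\Oh{n\sqrt{\log\delta}}$ bits for the emptiness index, the $\Oh{n}$ bits for dictionaries $R_\ell$ and their pointers, and the remaining auxiliary structures yields the claimed space; the precomputed tables of size $\Oh{U^\epsilon}$ are those of the \cite{GRR09} block routine and are independent of $X$. Finally, the $\delta < \log n$ regime is handled via $n$ length-$\delta$ slices, an $\Oh{n}$-bit partial-sum structure, and per-slice \cite{GRR09} predecessor structures totalling $\Oh{n\log\log\delta}$ bits, giving constant-time queries well within the stated bound.
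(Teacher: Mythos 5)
Your proposal reproduces the paper's own proof essentially step by step: the truncated level hierarchy up to $\ell_0 = \sqrt{\log\delta} + \log\log\delta$, indexable dictionaries $R_\ell$ holding only dense upper-level intervals, the $(\log\log n + 1)$-bit pointer-or-bitmap scheme whose space is bounded via the two overlap lemmas (each point in at most $2\sqrt{\log\delta}$ explicit bitmaps), the compressed range-emptiness index, the fallback to a piece/slice-compressed \cite{BN12} predecessor when the interval is sparse, and the separate $\delta < \log n$ regime. The only minor imprecision is that you write as though every level is probed through $R_\ell$, whereas the lower levels $\ell \le \ell_0$ are handled through the mphfs $f_\ell$ and are stored unconditionally; this does not change the argument.
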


We can slightly adapt this procedure to return some element when the range
is nonempty. The structure used within the intervals \cite{BBPV10} is a weak
prefix search data structure, so it will return some element when it finds 
that the interval is nonempty. In case the only elements are in the sequence
of whole consecutive intervals covered by the partial sum data structure, we
can use the structure to find the first nonempty interval in the sequence
(by searching for the interval where the sum reaches $x+1$, being $x$ the sum
up to the first interval in the sequence, not including it). Once we
have identified a nonempty interval, the weak prefix search data structure
\cite{BBPV10} of this interval will give us one element in it. This feature
will be useful later in the paper.

\section{Number of Points in a Range, Again}
\label{sec:count2}

We propose a different range counting data structure, which performs better
when there are fewer points in the count. Namely, we count in time 
$\Oh{\sqrt{\frac{\log(occ+1)}{\log\log(occ+1)}}}$ using $\Oh{n\sqrt{\lg(U/n)}}$
bits.

We use the finger-search data structure of Andersson and Thorup~\cite{AT07}.
Given $n$ elements in the discrete universe $[1,U]$, it uses $\Oh{n\lg U}$ bits
and answers the following variant of the predecessor query: Given a ``finger''
element $k$ and a query for the predecessor of $x$, it answers in time
$\Oh{\sqrt{\frac{\log(rd(x,k)+1)}{\log\log(rd(x,k)+1)}}}$, where $rd(x,k)$ is 
the number of points lying between $k$ and $x$. In addition, we will use the
range emptiness data structure of Section~\ref{sec:range_emptiness}.

First, we cut the universe into $\lceil U/n\rceil$ intervals of equal size and
store in a prefix sum data structure the number of elements in each interval. 
Given a range counting query, we use the prefix sums to count the number of elements in the intervals that are fully contained in the query range. What remains is to count the number of elements in the up to two intervals that are not fully contained in the query range. 

To that end, we sample one every $b$ keys, with $b=\log (U/n)$, inside each 
interval, and store the sampled keys in a finger-search data structure for that
interval. We do not store the full keys, but only the least significant
$\log (U/n)$ bits, since the $\log n$ upper bits of all the keys inside an 
interval are the same. If an interval contains less than $b$ keys, we do not 
store the finger-search data structure. Overall, the finger-search data 
structures store up to $n/b$ keys, each of $b$ bits, for a total space usage
of $\Oh{n}$ bits. 

The elements between two sampled keys form a {\em block},
and we store one predecessor structure \cite[Lem.~3.3]{GRR09} for each block. These
will add up to $\Oh{n\lg\lg(U/n)}$ bits of space, plus a fixed shared table
of $\Oh{(U/n)^\epsilon}$ bits, for some constant $\epsilon$. The range-emptiness
data structure of Section~\ref{sec:range_emptiness} uses other 
$\Oh{n\sqrt{\log (U/n)}}$ bits, which dominate the overall space. 

Given a range $[i,j]$ fully contained in an interval, we first ask if the range
is empty. If it is, the count is zero. Otherwise, the range emptiness data structure returns some element $k\in[i,j]$. Then we perform two queries on the finger-search data structure, for the points $i$ and $j$, using the finger $k$. Since $rd(i,k),rd(j,k)\leq occ$, the queries take time time at most $\Oh{\sqrt{\frac{\log(occ+1)}{\log\log(occ+1)}}}$, and give us the predecessor $j_0$ of $j$ and the successor $i_0$ of $i$ among the sampled keys stored in the finger-search data structure. Finally, the ranges $[i,i_0]$ and $[j_0,j]$ are contained in blocks, so a predecessor search on each takes constant time using the predecessor structures.

\begin{theorem} \label{thm:rangecount2}
Given $n$ points in the discrete universe $[1,U]$ stored in an array $X[1..n]$, there exists a data structure
using $\Oh{n\sqrt{\log (U/n)}}$ bits of space that returns the number $occ$ of points in any range $[i,j]$ in time $\Oh{\sqrt{\frac{\log(occ+1)}{\log\log(occ+1)}}}$ and in $\Oh{1}$ accesses to the array $X$. The data structure uses precomputed tables that occupy $\Oh{(U/n)^\epsilon}$ bits of space (where $0<\epsilon<1$ is any constant), which are independent of the point set.
\end{theorem}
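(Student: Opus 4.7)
The plan is to assemble the components sketched in the passage preceding the theorem and then verify the bounds. First I would partition $[1,U]$ into $\lceil U/n\rceil$ intervals of equal size $\lceil U/n\rceil$, build a prefix sum structure (Section~\ref{subsec:prefsum_ds}) on the counts of points in each interval, and add the range-emptiness index of Section~\ref{sec:range_emptiness}. Inside each interval, I would sample one every $b=\log(U/n)$ points, store its samples in a private Andersson--Thorup~\cite{AT07} finger-search structure built only over the low $\log(U/n)$ bits of the samples (the high $\log n$ bits are constant across an interval), and build the constant-time predecessor index of \cite[Lem.~3.3]{GRR09} for each block of $b$ keys between consecutive samples.

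On a query $[i,j]$, the intervals fully contained in $[i,j]$ are counted in $\Oh{1}$ by two prefix-sum lookups. By symmetry I describe one of the up to two partial intervals. I first ask the range-emptiness structure whether its portion of $[i,j]$ is empty; if so, the contribution is zero. Otherwise the structure returns a witness $k \in [i,j]$ lying in that portion. Using $k$ as a finger, I call the interval's finger-search structure twice on the low bits of the two relevant endpoints, obtaining the sampled predecessor of one endpoint and the sampled successor of the other. A constant-time predecessor query inside each of the two bracketing blocks then completes the count.

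For the space, the prefix sum uses $\Oh{n}$ bits; the finger-search structures hold at most $n/b$ keys of $b$ bits each, totalling $\Oh{n}$ bits; the block predecessor structures take $\Oh{n\log\log(U/n)}$ bits plus a shared table of $\Oh{(U/n)^\epsilon}$ bits; and the range-emptiness index of Section~\ref{sec:range_emptiness} contributes $\Oh{n\sqrt{\log(U/n)}}$ bits, which dominates. For the time, the crucial observation is that $rd(i,k),rd(j,k)\le \occ$: since $k\in[i,j]$, every point between $i$ and $k$, or between $k$ and $j$, is already counted in $\occ$. Hence each finger-search call costs $\Oh{\sqrt{\log(\occ+1)/\log\log(\occ+1)}}$, and every other step is $\Oh{1}$; only the block predecessor of \cite[Lem.~3.3]{GRR09} probes $X$, and it does so $\Oh{1}$ times.

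The step that I expect to require the most care is the space accounting for the finger-search structures, since applying the Andersson--Thorup bound of $\Oh{m\log U'}$ bits on $m$ keys in universe $[1,U']$ naively with $U'=U$ would be super-linear. The resolution, already embedded in the construction, is to build one private structure per interval on only the low $\log(U/n)$ bits of its samples, so $U'=\Oh{U/n}$ and the total shrinks to $\Oh{(n/b)\cdot b}=\Oh{n}$ bits, leaving the range-emptiness index as the true bottleneck. A secondary but routine check is that the finger-search, prefix-sum, and range-emptiness components do not themselves probe $X$, so the $\Oh{1}$-accesses claim reduces to the behaviour of the two block predecessor queries.
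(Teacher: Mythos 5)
Your proposal is correct and follows essentially the same construction as the paper: partition the universe, use prefix sums for whole intervals, sample every $b=\log(U/n)$ keys into a per-interval Andersson--Thorup structure over the low-order bits, bracket with the block predecessor index of Grossi et al., and drive the finger from a witness returned by the range-emptiness structure so that the finger distance is at most $\occ$. The space accounting (including the observation that the range-emptiness index of Section~\ref{sec:range_emptiness} dominates) and the $\Oh{1}$-access claim are argued exactly as in the paper.
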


\section{Counting Elements in a Range}

We now switch to another scenario, where instead of points in a universe we
have a sequence $S[1..n]$ of elements over a discrete alphabet of symbols in
$[1,\sigma]$.
We use the results of the previous sections to answer queries on $S$, on top
of a representation of $S$ that can answer access, rank and select queries.
In this section we show how to count the number of occurrences of a given 
symbol in an array interval, in time that improves with its frequency in
the interval, and using compressed space.

The basic idea is to create, for each symbol $1 \le c \le \sigma$, a point set
$P_c = \{ i,~S[i]=c\}$ over universe $[1,n]$, and reduce the counting for 
symbol $c$ to range counting on $P_c$. We call $n_c$ the number of occurrences 
of $c$ in $S$, and $\delta_c = n/n_c$ its inverse relative frequency. We will 
use the range counting structure of Theorem~\ref{thm:rangecount} for each 
$P_c$, using $\delta=\delta_c$.

We represent $S$ using alphabet partitioning \ignore{(OJO must explain this in more depth in Sec 2; this explanation is right after Thm 10)}, which distributes the alphabet into subalphabets according to the
value of $\lceil \lg \delta_c \cdot \lg n \rceil$. Thus any pair of symbols 
$c$ and $c'$ belonging to the same subalphabet satisfy $\delta_c = \Theta(\delta_{c'})$. In the alphabet partitioned representation, the subalphabets of
polylogarithmic size are represented so that access, rank and select take
constant time, and thus we can solve the counting for those symbols in constant
time using rank queries. Note that if $\delta_c\leq\log n$, it follows that
$n_c \ge n/\lg n$, and thus there cannot be more than $\lg n$ symbols where 
that holds. Therefore all those subalphabets are of logarithmic size, and
we can focus only on the case $\delta_c>\log n$. 

Note that, on those symbols with larger $\delta_c$, rank queries on the 
partitioned representation take time $\Oh{\log\frac{\log\delta_c}{\log w}}$,
and those can be used to solve the counting query. This time is good enough
for intervals of density below $1/2^{\sqrt{\log\delta_c}}$, so this replaces
the predecessor data structure used in Section~\ref{sec:count}.
We also note that the range emptiness data structure needs to access the
array of ``points'' $P_c$. This is simulated with select operations on $S$.

As for the space when $\delta_c>\log n$, the data structure of
Theorem~\ref{thm:rangecount} uses $\Oh{n_c\sqrt{\lg\delta_c}}$ bits. 
We let $nH=n'H'+n''H''$, where $n'H'=\sum n_c\log(\delta_c)$  for all $c$ with 
$\delta_c>\log n$ is the contribution of characters $c$ with $\delta_c>\log n$ 
to the total entropy of the sequence $S$
and $n'$ is the total number of occurrences of such characters ($n''$ and $H''$
are similarly defined for the remaining symbols). It is easy to see 
that $H'\geq \log\log n$. 

By convexity of the logarithm, the total space used for all $c$ with  $\delta_c>\log n$ adds up to 
$\Oh{n' \sqrt{H'}}\leq O(n'\frac{H'}{\sqrt{\log\log n}})\leq O(nH/\sqrt{\log\log n})=o(nH)$ bits. 

We also build the precomputed tables for $\log^2 n$ different values of 
$\delta$, thus
all the precomputed tables for the predecessor structures \cite[Lem.~3.3]{GRR09} occupy $\Oh{n^\epsilon\log^2 n}=o(n)$ bits of space.

Therefore, the space is dominated by the (alphabet partitioned) representation 
of $S$. Apart from operation rank, it supports select and access in constant 
time if we let it use $(1+\epsilon)nH$ bits of space.

\begin{theorem} \label{thm:counting}
For any positive constant $\epsilon$, we can store a sequence $S[1..n]$ over
alphabet $[1,\sigma]$ and with per-symbol entropy $H$, 
within \((1 + \epsilon) n H+o(n)\) bits, such that it
supports operations access and select in $\Oh{1}$ time and rank in 
time $\Oh{\lg\lg\sigma}$. Moreover, given endpoints $i$ and $j$ and a symbol
$c \in [1,\sigma]$, it computes \(occ=\occ (a, S [i..j])\) in time
$\Oh{\log \frac{\log \frac{j - i + 1}{occ+1}}{\log w}}$.
\end{theorem}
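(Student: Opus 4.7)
The plan is to combine the alphabet-partitioned sequence representation of Barbay et al.~\cite{BCGNN13} with one copy of the range-counting structure of Theorem~\ref{thm:rangecount} per symbol, using the rank operation of the alphabet-partitioned representation itself as the inner ``predecessor'' step of Theorem~\ref{thm:rangecount} whenever the query interval is sparse. The base sequence representation immediately delivers $(1+\epsilon)nH + o(n)$ bits together with $O(1)$-time access and select, and $O(\log\log\sigma)$-time rank. For counting, I view the positions where symbol $c$ appears as a point set $P_c = \{i : S[i] = c\} \subseteq [1,n]$, so that $\occ(c, S[i..j])$ becomes a range-counting query on $P_c$; the array access to $P_c$ required by Theorem~\ref{thm:rangecount} is simulated in $O(1)$ time by $\select_c$ on $S$.

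I then split by the inverse relative frequency $\delta_c = n/n_c$. If $\delta_c \le \log n$, then $c$ belongs to a sub-alphabet of polylogarithmic size, which the Ferragina et al.~\cite{FMMN07} representation supports with $O(1)$-time rank; the counting is then a difference of two ranks in constant time, and at most $\log n$ symbols fall into this regime, so they cost nothing beyond the base representation. For each symbol $c$ with $\delta_c > \log n$, I install a private instance of the structure of Theorem~\ref{thm:rangecount} on $P_c$, taking its parameter $\delta$ to be $\delta_c$. The predecessor search used inside that structure on low-density intervals, which dominates its query time, I replace by a rank query on the alphabet-partitioned $S$; this query runs in time $O(\log(\log \delta_c)/\log w)$, exactly the budget assumed in the analysis of Section~\ref{sec:count}. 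The range-emptiness queries needed by Theorem~\ref{thm:rangecount} are decided analogously.

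For the space, each per-symbol instance costs $O(n_c \sqrt{\log \delta_c})$ bits. Writing $nH = n'H' + n''H''$ with the primed terms summing over the symbols with $\delta_c > \log n$, one has $H' \ge \log\log n$, and concavity of $\sqrt{\cdot}$ together with Jensen's inequality yields
\[
\sum_{c \,:\, \delta_c > \log n} n_c \sqrt{\log \delta_c} \;=\; O\!\left(n' \sqrt{H'}\right) \;=\; O\!\left(\frac{n'H'}{\sqrt{\log\log n}}\right) \;=\; o(nH).
\]
The precomputed tables of the inner predecessor structures~\cite{GRR09} are shared across at most $O(\log^2 n)$ distinct values of $\delta$, so they contribute only $O(n^\epsilon \log^2 n) = o(n)$ bits in total. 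Hence the $(1+\epsilon)nH + o(n)$ bound coming from the base representation dominates, and the claimed query time follows directly from that of Theorem~\ref{thm:rangecount}.

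The main obstacle is verifying that the alphabet-partitioned rank genuinely costs $O(\log(\log \delta_c)/\log w)$ time on a symbol of inverse frequency $\delta_c$, so that it can honestly substitute for the predecessor structure inside Theorem~\ref{thm:rangecount}. This follows because the sub-alphabet of $c$ is determined by $\lceil \log \delta_c \cdot \log n \rceil$, so all symbols grouped with $c$ share essentially the same inverse frequency, and the Golynski-style representation used on that sub-alphabet attains precisely this rank complexity. Once this equivalence is in place, the rest is bookkeeping: plug Theorem~\ref{thm:rangecount} into the per-symbol point sets and sum spaces via the convexity argument above.
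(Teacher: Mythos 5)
Your proposal follows the paper's own proof essentially step for step: per-symbol point sets $P_c$ with Theorem~\ref{thm:rangecount} instances parameterized by $\delta_c$, the same split between $\delta_c \le \log n$ (polylogarithmic sub-alphabet, constant-time rank) and $\delta_c > \log n$ (rank as the sparse-interval predecessor, select simulating array access to $P_c$), the same Jensen/convexity argument giving $O(n'\sqrt{H'}) = o(nH)$ extra bits, and the same $O(\log^2 n)$ shared precomputed tables. This is not a different route but a faithful paraphrase of Section~5 of the paper.
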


\ignore{OJO I changed the lglg n in the denominator to lg w, is this somehow wrong?
If not, this improves Thm 12.

OJO it also said that one can use Thm 3 and obtain another result; I have not
developed this.}

\section{Parameterized Range Minority} \label{sec:minority}

Recall from Section~\ref{sec:intro} that a $\tau$-minority for a range is a distinct element that occurs in that range but is not one of its $\tau$-majorities.  The problem of parameterized range minority is to preprocess a string such that later, given the endpoints of a range and $\tau$, we can quickly return a $\tau$-minority for that range if one exists.  Chan et al. gave a linear-space solution with $\Oh{1 / \tau}$ query time even for the case of variable $\tau$.  They first build a list of \(\lfloor 1 / \tau \rfloor + 1\) distinct elements that occur in the given range (or as many as there are, if fewer) and then check those elements' frequencies to see which are $\tau$-minorities.  There cannot be more than \(\lfloor 1 / \tau \rfloor\) $\tau$-majorities so, if there exists a $\tau$-minority for that range, then at least one must be in the list.  In this section we show how to implement this idea using compressed space.

To support parameterized range minority on \(S [1..n]\) in $\Oh{1 / \tau}$ time, we store data structures supporting $\Oh{1}$-time access, select and partial rank queries on $S$ and a data structure supporting $\Oh{1}$-time range-minimum queries on $C$.  For any positive constant $\epsilon$, we can store these data structures in a total of \((1 + \epsilon) n H + \Oh{n}\) bits.  Given $\tau$ and endpoints $i$ and $j$, in $\Oh{1 / \tau}$ time we use Muthukrishnan's algorithm to build a list of \(\lfloor 1 / \tau \rfloor + 1\) distinct elements that occur in \(S [i..j]\) (or as many as there are, if fewer) and the positions of their leftmost occurrences therein.  We check whether these distinct elements are $\tau$-minorities using the following lemma:

\begin{lemma} \label{lem:check}
Suppose we know the position of the leftmost occurrence of a distinct element in a range.  We can check whether that distinct element is a $\tau$-minority or a $\tau$-majority using a partial rank query and a select query on $S$.
\end{lemma}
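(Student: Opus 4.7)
The plan is to convert the frequency check into a single arithmetic comparison between two positions in $S$, using exactly one partial rank query and one select query.

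First, let $a$ denote the distinct element in question and let $p$ be the position of its leftmost occurrence in $S[i..j]$; by hypothesis $S[p]=a$ and $p\in[i,j]$. Because $p$ is the leftmost occurrence of $a$ inside $[i,j]$, every occurrence of $a$ in $S[i..j]$ lies in $S[p..j]$, so the count of $a$ in $S[i..j]$ equals the count of $a$ in $S[p..j]$. I would then set the threshold $k = \lceil \tau(j-i+1)\rceil$, so that $a$ is a $\tau$-majority iff $S[i..j]$ contains at least $k$ occurrences of $a$, and a $\tau$-minority otherwise (provided it occurs at all, which is guaranteed here by the existence of $p$).

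Next, because $S[p]=a$, a partial rank query returns $r = S.\mathrm{rank}_a(p)$, i.e., $p$ is the $r$th occurrence of $a$ in $S$. A single select query then yields $q = S.\mathrm{select}_a(r+k-1)$, the position of the $k$th occurrence of $a$ at or after $p$. If $q\le j$, then $S[p..j]$ contains at least $k$ occurrences of $a$, so $a$ is a $\tau$-majority of $S[i..j]$; if $q>j$, then $S[p..j]$ contains fewer than $k$ occurrences, so $a$ is a $\tau$-minority. This uses exactly the two operations claimed.

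There is no real obstacle; the only subtle point is the leftmost-occurrence observation that lets us replace the range $[i,j]$ by $[p,j]$, which is essential so that the partial rank at $p$ (rather than a general rank at $i-1$, which partial rank cannot provide) is sufficient to anchor the count.
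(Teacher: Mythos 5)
Your proposal is correct and takes essentially the same approach as the paper: both compute $r=S.\rank_a(p)$ via a partial rank at the known leftmost occurrence $p$, then call $S.\select_a(r+\lceil\tau(j-i+1)\rceil-1)$ and compare the result against $j$. The only difference is notational (the paper uses $k$ for the position where you use $p$).
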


\begin{proof}
Let $k$ be the position of the first occurrence of $a$ in \(S [i..j]\).  If \(S [k]\) is the $r$th occurrence of $a$ in $S$, then $a$ is a $\tau$-minority for \(S [i..j]\) if and only if the \((r + \lceil \tau (j - i + 1) \rceil - 1)\)th occurrence of $a$ in $S$ is strictly after \(S [j]\); otherwise $a$ is a $\tau$-majority.  That is, we can check whether $a$ is a $\tau$-minority for \(S [i..j]\) by checking whether
\[S.\select_a \left( \rule{0ex}{2.5ex} S.\rank_a (k) + \lceil \tau (j - i + 1) \rceil - 1 \right) > j\,;\]
since \(S [k] = a\), computing \(S.\rank_a (k)\) is only a partial rank query.
\qed
\end{proof}

To avoid storing $C$, which is used in Muthukrishnan's algorithm, we use
Sadakane's variant \cite{Sad07}, which marks the values found in a bitmap of
size $\sigma$, and stops the recursion when the new symbol to consider is
already marked (for this to work he must first process the left and then the
right interval of the minimum).

This gives us the following theorem, which improves Chan et al.'s solution to use nearly optimally compressed space with no slowdown.

\begin{theorem} \label{thm:fast min}
For any positive constant $\epsilon$, we can store $S$ in \((1 + \epsilon) n H + \Oh{n}\) bits such that later, given the endpoints of a range and $\tau$, we can return a $\tau$-minority for that range (if one exists) in $\Oh{1 / \tau}$ time.
\end{theorem}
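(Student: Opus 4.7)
The plan is to implement the Chan et al.\ strategy verbatim (list up to $\lfloor 1/\tau \rfloor + 1$ distinct elements occurring in $S[i..j]$ via Muthukrishnan's algorithm, then test each with Lemma~\ref{lem:check}), but to choose the underlying toolbox carefully so that every step takes constant time per element reported and the whole data structure fits in $(1+\epsilon)nH + \Oh{n}$ bits. The correctness of the high-level strategy is already argued by Chan et al.\ and Lemma~\ref{lem:check}: since there are at most $\lfloor 1/\tau \rfloor$ $\tau$-majorities, any list of $\lfloor 1/\tau \rfloor + 1$ distinct elements of the range must contain a $\tau$-minority whenever one exists.

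For the representation of $S$, I would pick a small $\epsilon' < \epsilon$ and take the Barbay et al.\ structure from Section~\ref{subsec:queries}, which uses $(1+\epsilon')nH + o(n)$ bits and supports access and select in $\Oh{1}$ time. On top of it I would layer the Belazzougui--Navarro partial-rank index from the full version of~\cite{BN??}, occupying $o(n)(H+1)$ bits and answering partial rank in $\Oh{1}$. Writing $o(n)(H+1) = o(nH) + o(n)$, the additional cost is absorbed into $(\epsilon - \epsilon')nH + \Oh{n}$, so the whole bundle fits in $(1+\epsilon)nH + \Oh{n}$ bits while providing constant-time access, select and partial rank on $S$.

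For Muthukrishnan's listing procedure, I will \emph{not} materialize the array $C$: as observed in Section~\ref{subsec:listing}, $C[k] = S.\select_{S[k]}(S.\rank_{S[k]}(k)-1)$ can be evaluated in constant time from one access, one partial rank (valid since we are asking for the rank of $S[k]$ at position $k$) and one select. On top of this implicit $C$ I would add the $\Oh{n}$-bit range-minimum structure of Sadakane or Fischer, answering range-minimum queries in $\Oh{1}$. To avoid spending $\sigma$ bits on the marker bitmap used in the recursion, I would follow Sadakane's variant, which processes the left subrange before the right one and stops whenever it would re-report a symbol already seen; this keeps the total listing time at $\Oh{1/\tau}$ because the recursion depth is bounded by the number of distinct symbols returned. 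Each listed element comes with the position of its leftmost occurrence in $S[i..j]$, so Lemma~\ref{lem:check} then lets us decide in $\Oh{1}$ time per candidate whether it is a $\tau$-minority, using one partial rank plus one select on $S$.

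The main obstacle I expect is bookkeeping for Sadakane's marker without paying $\Oh{\sigma}$ bits in the persistent space bound: since the bitmap is only needed for the duration of a single query, allocating it lazily or reusing a pre-cleared work area of $\Oh{1/\tau}$ slots indexed by the discovered symbols (cleared by undoing the writes at the end of the query) should be enough, and this is the sort of detail I would handle in the write-up rather than in this plan. Once this is in place, concatenating the three time bounds gives $\Oh{1/\tau}$ total query time, and summing the three space contributions gives $(1+\epsilon)nH + \Oh{n}$ bits, matching the statement of the theorem.
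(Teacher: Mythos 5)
Your proposal is correct and follows essentially the same route as the paper: a constant-time access/select/partial-rank representation of $S$ in $(1+\epsilon)nH+\Oh{n}$ bits, an $\Oh{n}$-bit range-minimum structure over the implicit array $C$ (computed via one access, one partial rank and one select), Sadakane's marked-bitmap variant of Muthukrishnan's listing (left subrange before right), and Lemma~\ref{lem:check} to test each of the $\lfloor 1/\tau\rfloor+1$ candidates. The only difference is your extra care about the $\sigma$-bit marker, which is unnecessary since $\sigma\le n$ makes it part of the $\Oh{n}$ term anyway.
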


Alternatively, for any function \(f (n) = \omega (1)\), we can store our data structures for access, select and partial rank on $S$ and range-minimum queries on $C$ in a total of \(n H + \Oh{n} + o (n H)\) at the cost of select queries taking $\Oh{f (n)}$ time.

\begin{theorem} \label{thm:small min}
For any function \(f (n) = \omega (1)\), we can store $S$ in \(n H + \Oh{n} + o (n H)\) bits such that later, given the endpoints of a range and $\tau$, we can return a $\tau$-minority for that range (if one exists) in $\Oh{(1 / \tau)\,f (n)}$ time.
\end{theorem}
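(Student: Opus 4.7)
The plan is to reuse the construction of Theorem~\ref{thm:fast min} verbatim, swapping the $(1+\epsilon)nH+\Oh{n}$-bit representation of $S$ for a more compact one that trades select time for space. Concretely, I would store $S$ using the representation of Belazzougui and Navarro~\cite{BN12} recalled in Section~\ref{subsec:queries}, which occupies $nH+o(n)(H+1)$ bits and supports access in $\Oh{1}$ time and select in $\Oh{f(n)}$ time, and layer on top the $o(n)(H+1)$-bit partial-rank add-on of~\cite{BN??} that sits above any $\Oh{1}$-time access oracle and answers partial rank in $\Oh{1}$ time. I still keep the $\Oh{n}$-bit range-minimum structure on $C$ from Sadakane~\cite{Sad07} or Fischer~\cite{Fis10}. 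Summing gives $nH + o(n)(H+1) + \Oh{n} = nH + \Oh{n} + o(nH)$ bits, matching the stated space bound.

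Querying then proceeds exactly as in Theorem~\ref{thm:fast min}. I would run Sadakane's variant of Muthukrishnan's coloured-range-listing algorithm on $S[i..j]$ to produce $\lfloor 1/\tau\rfloor+1$ distinct elements together with the positions of their leftmost occurrences in the range. The only difference from the previous proof is that each simulated access $C[k]=S.\select_{S[k]}\bigl(S.\rank_{S[k]}(k)-1\bigr)$ now costs one partial rank plus one select, i.e.\ $\Oh{f(n)}$ time, so the whole listing step costs $\Oh{(1/\tau)f(n)}$. For each listed candidate I apply Lemma~\ref{lem:check}: a partial rank in $\Oh{1}$ time followed by a select in $\Oh{f(n)}$ time decides whether the element is a $\tau$-minority, and summing over the $\Oh{1/\tau}$ candidates contributes another $\Oh{(1/\tau)f(n)}$. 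If any candidate tests positive it is returned; otherwise, since a range has at most $\lfloor 1/\tau\rfloor$ $\tau$-majorities, the $\lfloor 1/\tau\rfloor+1$ listed elements exhaust every possibility and I can safely report that no $\tau$-minority exists.

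The only point that is not bookkeeping is to verify that the $o(n)(H+1)$-bit partial-rank add-on of~\cite{BN??} composes cleanly with the access/select representation of~\cite{BN12} on the same underlying sequence, without double-counting space and without degrading either the $\Oh{1}$-time partial rank or the $\Oh{f(n)}$-time select. Because the add-on is built on top of any $\Oh{1}$-time access oracle and~\cite{BN12} provides exactly such an oracle, the composition goes through as advertised and the total space stays within the promised $nH + \Oh{n} + o(nH)$ bits.
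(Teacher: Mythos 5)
Your proposal matches the paper's approach: the paper proves Theorem~\ref{thm:small min} exactly by reusing the Theorem~\ref{thm:fast min} construction with the representation of Belazzougui and Navarro~\cite{BN12} (plus the $o(n)(H+1)$-bit partial-rank add-on) in place of the $(1+\epsilon)nH$-bit structure, trading $\Oh{f(n)}$-time select for the tighter $nH + \Oh{n} + o(nH)$-bit space bound. The query-time bookkeeping (each $C$-access and each Lemma~\ref{lem:check} check costs one $\Oh{1}$ partial rank plus one $\Oh{f(n)}$ select) and the space accounting are both correct.
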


To reduce the space bound of this t to \(n H + o (n (H + 1))\) bits, improving Chan et al.'s solution to use optimally compressed space with nearly no slowdown, we must reduce the space of the range-minority data structure to $o(n)$.

We do this via sparsification. We cut the sequence into blocks of length $f(n)$, choose the $n/f(n)$ minimum values of each block, and build the RMQ data structure on the new array $C'[1..n/f(n)]$. This requires $\Oh{n/f(n)} = o(n)$ bits.
Muthukrishnan's algorithm is then run over $C'$ as follows. We find the minimum
position in $C'$, then recursively process its left interval, then process the
minimum of $C'$ by considering the $f(n)$ corresponding cells in $C$, and 
finally process the right part of the interval. The recursion stops when the
interval becomes empty or when all the $f(n)$ elements in the block of $C$
are marked. In addition we must sequentially process the $2f(n)$ cells of $S$
that only partially overlap blocks in $C'$.
We note that a similar technique is proposed by Hon et 
al.~\cite{HSV09}, but it lacks sufficient detail to ensure correctness. We 
prove such correctness next.

\begin{lemma}
The procedure described correctly identifies all the distinct points in
$S[i..j]$, working over at most $f(n)$ cells per new element discovered.
\end{lemma}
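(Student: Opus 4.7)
The plan is to split the argument into a correctness part and a cost part, both keyed to the single invariant that a position $k \in [i..j]$ is the leftmost occurrence of its distinct element in $S[i..j]$ if and only if $C[k] < i$.

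First I would deal with the boundary. Let $[i'..j']$ denote the block indices of $C'$ corresponding to blocks of $C$ fully contained in $[i..j]$, so the at most $2f(n)$ cells of $S[i..j]$ outside these aligned blocks are scanned directly; this captures every first occurrence that lies in a boundary block and costs $O(f(n))$ total, absorbed into the query overhead. After this pass the unscanned region is exactly the union of the blocks indexed by $[i'..j']$.

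Next I would adapt the standard Muthukrishnan invariant to the sparsified array. Since $C'[b]=\min_{k\in \mathrm{block}(b)} C[k]$, a block $b\in[i'..j']$ contains a first occurrence if and only if $C'[b]<i$. The recursion on $C'$ (find argmin $m$, recurse left, process $m$, recurse right, and stop whenever the current argmin has $C'$ value $\ge i$) is exactly Muthukrishnan's algorithm on $C'$ with threshold $i$; by the usual induction it visits precisely the blocks $b\in[i'..j']$ with $C'[b]<i$. Whenever such a block is visited, scanning its $f(n)$ cells of $C$ in order and invoking Sadakane's marking bitmap reports every first occurrence in the block while guaranteeing each distinct element is emitted at most once.

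The hard part, and the one I expect to be the real obstacle, is pinning down the ``at most $f(n)$ cells per new element'' charge. I would argue that every visited full block produces at least one new distinct output, so its $f(n)$ scanned cells can be amortized against that output. The key observation is that first-occurrence positions in $S[i..j]$ are in bijection with distinct elements appearing in $S[i..j]$: if $k<k'$ both satisfied $C[k]<i$, $C[k']<i$, and $S[k]=S[k']$, then $C[k']\ge k\ge i$, a contradiction. Consequently, a block with $C'[b]<i$ contains at least one position whose distinct element has not been reported by the scan of any other block (since its unique first-occurrence position lives in exactly one block), so the marking check cannot discard every candidate in the block. The partial boundary scans are handled separately as a fixed $O(f(n))$ additive cost per query, which is dominated by the cost of producing the first output element (or by the ambient $\Oh{1/\tau}$ query time in the eventual application). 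This yields exactly the two claims of the lemma.
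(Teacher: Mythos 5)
Your overall strategy --- in-order traversal of the visited blocks, the characterization ``block $b$ contains a first occurrence iff $C'[b]<i$'', and the bijection between first-occurrence positions and distinct elements that powers the amortization --- is the same engine that drives the paper's induction, and those parts are sound. The genuine gap is that you analyze a different stopping rule from the one the procedure actually uses. You stop the recursion when the current argmin satisfies $C'[\mathrm{argmin}]\ge i$; the described procedure never compares $C'$-values against $i$ (the whole point of Sadakane's variant is to avoid materializing $C$) and instead stops when, upon scanning the $f(n)$ cells of the minimum block, it finds every element already marked. The two tests are equivalent, but proving that equivalence is precisely where the paper's induction does its work: if the element at the minimum position $k$ is already marked, it must occur in the already-processed prefix $[i..\ell-1]$, so $C[k]\ge i$; since $C[k]$ is the minimum over the whole current subinterval, every back-pointer in $[\ell..r]$ is $\ge i$, and following the pointer chains shows every element of $[\ell..r]$ is already marked, so stopping is safe. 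Conversely, reading ``unmarked'' as ``new first occurrence'' requires the invariant that when a block is reached, all elements occurring in $[i..j]$ to its left (including the left boundary segment, which must therefore be scanned first) are already marked. You assert the conclusions of this invariant (``visits precisely the blocks with $C'[b]<i$'', ``the marking check cannot discard every candidate'') without establishing it, so as written your argument applies to your variant rather than to the procedure described.

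A second, smaller consequence of the same mismatch: under the real stopping rule, the terminal block of each recursive call is scanned and found fully marked, producing no new element, so ``every visited full block produces at least one new distinct output'' is literally false. This is easily repaired --- there is at most one such unproductive scan per productive recursive call, so the cost per discovered element remains $\Oh{f(n)}$ --- but it needs to be said, since it is exactly the kind of slack the lemma's per-element accounting is meant to control.
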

\begin{proof}
We show by induction on the size of the current subinterval $[\ell..r]$ that, 
if we start the procedure with the elements that already appear in $[i..\ell-1]$
marked, then we find and mark the leftmost occurrence of each distinct symbol 
not yet marked, spotting at least one new element per block scanned. 

This is trivial for the empty interval. Now consider
the minimum position in $C'[k']$, which contains the leftmost occurrence of some
element $S[k]$, for some $k$ within the block of $C'[k']$. If $S[k]$ is already
marked, it means it appears in $[i..\ell-1]$, and thus the leftward pointer
$C[k] \ge i$, and so holds for all the values in $C[\ell..r]$. Thus if all the
elements in the block of $C'[k']$ are marked, we can safely stop the procedure.

Otherwise,
before doing any marking, we recursively process the interval to the left of
block $k'$, which by inductive hypothesis marks the unique elements in that
interval. Now we process the current block of size $f(n)$, finding at least
the new occurrence of element $S[k]$ (which cannot appear to the left of $k'$).
Once we mark the new elements of the current block, we process the interval
to the right of $k'$, where the inductive hypothesis again holds. \qed
\end{proof}

By using this procedure to obtain any $\lfloor 1/\tau\rfloor+1$ 
distinct elements, we obtain the improved result.

\begin{theorem} \label{thm:small min better}
For any function \(f (n) = \omega (1)\), we can store $S$ in \(n H + o(n)(H+1))\) bits such that later, given the endpoints of a range and $\tau$, we can return a $\tau$-minority for that range (if one exists) in $\Oh{(1 / \tau)\,f (n)}$ time.
\end{theorem}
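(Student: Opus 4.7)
The plan is to plug together three ingredients already available in the paper: the sparsified Muthukrishnan procedure whose correctness is guaranteed by the preceding lemma, Sadakane's marking variant used in Theorem~\ref{thm:fast min} to avoid storing $C$, and the Belazzougui--Navarro representation of $S$ cited in Section~\ref{subsec:queries} that fits in $nH+o(n)(H+1)$ bits while supporting access and partial rank in constant time and select in $\Oh{f(n)}$ time.

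Concretely, I would build Sadakane's or Fischer's $\Oh{1}$-time RMQ structure on the sparsified array $C'[1..\lceil n/f(n)\rceil]$ of block-minima of $C$, at a cost of $\Oh{n/f(n)}=o(n)$ bits, never materialising $C$ itself; any value $C[k]$ actually needed is recovered on demand from $S$ with one access, one partial rank and one select, in $\Oh{f(n)}$ time. Then I would run the sparsified Muthukrishnan procedure of the preceding lemma until it has produced $\lfloor 1/\tau\rfloor+1$ distinct elements (together with the positions of their leftmost occurrences) and apply Lemma~\ref{lem:check} to each of them. Summing the sequence space, the $o(n)(H+1)$ bits for the partial-rank index, and the $o(n)$-bit sparsified RMQ yields the stated $nH+o(n)(H+1)$-bit bound.

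For the running time, the observation supplied by the preceding lemma is that inside a block no access to $C$ is required: when we enter the block every distinct symbol of $[i..\ell-1]$ is already marked, so a simple left-to-right scan of the $f(n)$ positions of $S$ within the block suffices to output every previously unseen symbol. Each such position costs $\Oh{1}$ via the access operation, so each block contributes $\Oh{f(n)}$ time and yields at least one new distinct element; the two partial boundary blocks add a single $\Oh{f(n)}$ term. Enumerating $\lfloor 1/\tau\rfloor+1$ candidates therefore costs $\Oh{(1/\tau)f(n)}$, and each subsequent application of Lemma~\ref{lem:check} costs an extra partial rank plus select, i.e.\ $\Oh{f(n)}$, for the same aggregate.

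The main obstacle I foresee is implementing the ``already marked'' test without inflating the permanent space budget: a flat $\sigma$-bit bitmap would be $\Theta(n)$ bits when $\sigma=\Theta(n)$ and would exceed our $o(n)(H+1)$ allowance. I would sidestep this by maintaining the marks in a small per-query dictionary of $\Oh{(1/\tau)\log\sigma}$ bits with expected constant membership time, holding only the at most $\lfloor 1/\tau\rfloor+1$ distinct symbols encountered so far, so that the marking set is transient query-time scratch rather than a permanent part of the index.
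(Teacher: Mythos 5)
Your proposal matches the paper's proof essentially step for step: the same block-sparsified array $C'$ with an $\Oh{n/f(n)}$-bit range-minimum structure, the same reliance on the correctness lemma to charge $\Oh{f(n)}$ work per new distinct element discovered (plus the two partial boundary blocks), and the same final verification of the $\lfloor 1/\tau\rfloor+1$ candidates via Lemma~\ref{lem:check} using constant-time access and partial rank and $\Oh{f(n)}$-time select. The only divergence is the marking set: the paper keeps Sadakane's $\sigma$-bit bitmap as transient working storage outside the index (it can be reused across queries by unmarking the at most $\lfloor 1/\tau\rfloor+1$ flagged symbols afterwards, preserving worst-case time), whereas your per-query hash dictionary needlessly turns the worst-case bound into an expected one.
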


\section{Parameterized Range Majority with Fixed $\tau$} \label{sec:fixed maj}

The standard approach to finding $\tau$-majorities, going back to Misra and Gries' work, is to build a list of $\Oh{1 / \tau}$ candidate elements and then verify them.  For parameterized range majority, an obvious way to verify candidates is to use rank queries.  The problem with this approach is that, as noted in Section~\ref{subsec:queries}, we cannot support general rank queries in \(o (\log (\log \sigma / \log w))\) time while using \(n \log^{\mathcal{O} (1)} n\) space; e.g., with only linear space, we cannot support general rank queries in $\Oh{1}$ time when the alphabet is super-polylogarithmic.  If we can find the position of candidates' first occurrences in the range, however, then by Lemma~\ref{lem:check} we can check them using only partial rank and select queries.

Suppose we want to support parameterized range majority on \(S [1..n]\) for a fixed threshold $\tau$.  We first store data structures that support access, select and partial rank on $S$ in $\Oh{1}$ time, which takes $\Oh{n}$ space.  For \(0 \leq b \leq \lfloor \log n \rfloor\), let \(F_b [1..n]\) be the binary string in which \(F_b [k] = 1\) if the distinct element \(S [k]\) occurs at least \(\tau 2^b\) times in \(S [k..k + 2^{b + 1} - 1]\); and let $S_b$ and $C_b$ be the subsequences of $S$ and $C$, respectively, consisting of those elements flagged by 1s in $F_b$.  We store $F_b$ in $\Oh{n}$ bits such that we can support access, rank and select queries on $F_b$ in $\Oh{1}$ time.  Notice we can implement an access query on $S_b$ or $C_b$ as a select query on $F_b$ and access queries on $S$ or $C$, respectively.  As described in Section~\ref{subsec:listing}, we can implement an access query to $C$ as access, select and partial rank queries on $S$.  We also store an $\Oh{1}$-time range-minimum data
structure for $C_b$, which takes $\Oh{|S_b|}$ bits.

With these data structures, given endpoints $i$ and $j$ with \(\lfloor \log (j
- i + 1) \rfloor = b\), we use Muthukrishnan's algorithm to list the distinct
elements in \(S_b [F_b.\rank_1 (i)..\)
\(F_b.\rank_1 (j)]\) and the positions of their leftmost occurrences therein; we then use select queries on $F_b$ to find the positions of those elements in $S$.  That is, we list the distinct elements in \(S [i..j]\) that are flagged by 1s in $F_b$ and the positions of their leftmost flagged occurrences therein.  We then apply Lemma~\ref{lem:check} to each of these elements, treating the positions of their leftmost flagged occurrences as the positions of their leftmost occurrences.  Since each distinct element in \(S [i..j]\) that is flagged in $F_b$ occurs at least \(\tau 2^b\) times in \(S [i..j + 2^{b + 1} - 1] \subset S [i..i + 2^{b + 2}]\), there are $\Oh{1 / \tau}$ of them and we use a total of $\Oh{1 / \tau}$ time.

Notice that the leftmost flagged occurrences of a distinct element $a$ in \(S [i..j]\) may not necessarily be the leftmost occurrence therein.  However, if $a$ is a $\tau$-majority in \(S [i..j]\) then, by definition, $a$ occurs at least \(\tau (j - i + 1) \geq \tau 2^b\) times in \(S [i..j] \subset S [i..i + 2^{b + 1} - 1]\), so $a$'s leftmost occurrence in \(S [i..j]\) is flagged by a 1 in $F_b$ and, therefore, we apply Lemma~\ref{lem:check} to it.  It follows that we return each $\tau$-majority in \(S [i..j]\).

We store only one set of data structures supporting access, select and partial rank on $S$.  Summing over $b$ from 0 to \(\lfloor \log n \rfloor\), the data structures for range-minimum queries take a total of $\Oh{n \log n}$ bits, which is $\Oh{n}$ words.  Therefore, we have the first linear-space data structure with worst-case optimal $\Oh{1 / \tau}$ query time for Karpinski and Nekrich's original problem of parameterized range majority with fixed $\tau$.

\begin{theorem} \label{thm:fixed maj}
Given a threshold $\tau$, we can store a string in linear space and support parameterized range majority in $\Oh{1 / \tau}$ time.
\end{theorem}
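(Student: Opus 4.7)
The plan is to realize the theorem via the construction already sketched: build access/select/partial-rank structures on $S$, and then for every length scale $b \in [0, \lfloor \log n \rfloor]$ a bitmap $F_b$ flagging positions $k$ at which $S[k]$ appears at least $\tau 2^b$ times in the window $S[k..k+2^{b+1}-1]$, together with the induced subsequence $S_b$ (realized virtually through $F_b$), the virtual array $C_b$, and a range-minimum data structure on $C_b$. The central claim to verify is that, given endpoints $i \le j$ and $b = \lfloor \log(j-i+1) \rfloor$, running Muthukrishnan's coloured-range-listing procedure on $C_b$ restricted to positions $F_b.\rank_1(i)..F_b.\rank_1(j)$ returns in $\Oh{1/\tau}$ time a superset of the $\tau$-majorities of $S[i..j]$, each of which can then be verified in $\Oh{1}$ by Lemma~\ref{lem:check}.

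First I would account for the space. The access/select/partial-rank structures take $\Oh{n}$ words by the results summarised in Section~\ref{subsec:queries}. Each $F_b$ takes $\Oh{n}$ bits supporting constant-time rank/select, access to $C_b$ is simulated via a select on $F_b$ and the standard identity $C[k] = S.\select_{S[k]}(S.\rank_{S[k]}(k)-1)$ from Section~\ref{subsec:listing}, and the range-minimum structure of Sadakane/Fischer on $C_b$ takes $\Oh{|S_b|} = \Oh{n}$ bits. Summing over the $\Oh{\log n}$ scales gives $\Oh{n \log n}$ bits, i.e.\ $\Oh{n}$ words.

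Next I would prove the two correctness properties of the candidate set produced by Muthukrishnan's algorithm on $S_b$ restricted to $[F_b.\rank_1(i)..F_b.\rank_1(j)]$. Completeness: if $a$ is a $\tau$-majority in $S[i..j]$, then $a$ occurs at least $\tau(j-i+1) \ge \tau 2^b$ times in $S[i..j] \subseteq S[i..i+2^{b+1}-1]$, so the leftmost position $k \in [i..j]$ at which $a$ occurs has $F_b[k]=1$, and Muthukrishnan's procedure on $C_b$ will report it. Bounded cardinality: any distinct element flagged in $F_b$ within $[i..j]$ occurs at least $\tau 2^b$ times in $S[i..i+2^{b+2}-1]$, hence at most $\Oh{1/\tau}$ such distinct elements can exist, so the listing halts after $\Oh{1/\tau}$ steps. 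For each reported element, the leftmost flagged occurrence in $[i..j]$ plays the role required by Lemma~\ref{lem:check}: even if it is not the leftmost occurrence in $S[i..j]$, the lemma's test on whether $S.\select_a(S.\rank_a(k) + \lceil\tau(j-i+1)\rceil - 1) > j$ still correctly distinguishes $\tau$-majorities, because a $\tau$-majority has at least $\lceil\tau(j-i+1)\rceil$ occurrences in $[k..j]$ for any $k \in [i..j]$.

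The most delicate point, and the one I would justify most carefully, is precisely this last subtlety: Muthukrishnan's algorithm, when run over $C_b$ rather than $C$, may return for a distinct element a position $k$ that is not its leftmost occurrence in $S[i..j]$, only its leftmost \emph{flagged} occurrence; I need to argue that Lemma~\ref{lem:check}'s test remains sound and complete in that setting, which follows because the test depends only on whether $a$ has $\lceil\tau(j-i+1)\rceil$ occurrences in $[k..j]$, a condition that is equivalent to $a$ being a $\tau$-majority of $S[i..j]$ whenever $k$ lies in $[i..j]$. Combining the $\Oh{1/\tau}$ Muthukrishnan steps with the $\Oh{1}$-time verification per candidate gives the promised $\Oh{1/\tau}$ query time, completing the proof of Theorem~\ref{thm:fixed maj}.
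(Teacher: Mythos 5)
Your construction and query procedure are the same as the paper's, and the space accounting and the two counting arguments (completeness of the candidate set and its $\Oh{1/\tau}$ cardinality) match the paper's proof. However, the sentence you use to discharge what you correctly identify as the delicate point is false as stated: it is \emph{not} true that a $\tau$-majority $a$ of $S[i..j]$ has at least $\lceil\tau(j-i+1)\rceil$ occurrences in $[k..j]$ for \emph{any} $k\in[i..j]$ (take $k$ past most of $a$'s occurrences), so the claimed ``equivalence'' of the test with being a $\tau$-majority for arbitrary $k$ does not hold. The correct argument, which is the one the paper gives, splits into two one-directional claims: (i) no false positives, because if the test finds $\lceil\tau(j-i+1)\rceil$ occurrences of $a$ in $[k..j]\subseteq[i..j]$ then $a$ really is a $\tau$-majority, regardless of which occurrence $k$ is; and (ii) no false negatives, because for every true $\tau$-majority the leftmost occurrence in $[i..j]$ is itself flagged in $F_b$ (as you already showed in your completeness paragraph), so the leftmost \emph{flagged} occurrence coincides with the leftmost occurrence and the test is applied at the position where Lemma~\ref{lem:check} is exact. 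All the ingredients for this are present in your write-up, so the fix is local: replace the equivalence claim by the two-directional argument, noting that for non-majorities the test may be applied at a non-leftmost position but can then only (correctly) reject.
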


\section{Parameterized Range Majority with Variable $\tau$} \label{sec:variable maj}

\subsection{Nearly linear space with optimal query time} \label{subsec:fast maj}

Suppose we have an instance of the data structure from Theorem~\ref{thm:fixed maj} for each threshold \(1, 1 / 2, 1 / 4, \ldots, 1 / 2^{\lceil \log n \rceil}\), which takes a total of $\Oh{n \log n}$ space.  Given endpoints $i$ and $j$ and a threshold $\tau$, we can use the instance for threshold \(1 / 2^{\lceil \log (1 / \tau) \rceil}\) to build a list of $\Oh{1 / \tau}$ candidate elements and then check them with Lemma~\ref{lem:check}; this takes a total of $\Oh{1 / \tau}$ time and returns all the $\tau$-majorities in \(S [i..j]\).  Gagie et al. used a variant of this idea to obtain the first data structure for variable $\tau$.  We can easily reduce our space bound to $\Oh{n \log \sigma}$ because, if \(1 / \tau \geq \sigma\), then we can simply use Muthukrishnan's algorithm with $S$ and $C$ to list in \(\Oh{\sigma} = \Oh{1 / \tau}\) time all the distinct elements in \(S [i..j]\) and the positions of their leftmost occurrences therein, then check them with Lemma~\ref{lem:check}.

Notice that we need store only one set of data structures supporting access, select and partial rank on $S$.  Also, if \(S [k]\) is a \((1 / 2^t)\)-majority in a range, then it is also a \((1 / 2^{t'})\)-majority for all \(t' \geq t\).  It follows that if, instead of querying only the instance for the threshold \(1 / 2^{\lceil \log (1 / \tau) \rceil}\), we query the instances for all the thresholds \(1, 1 / 2, 1 / 4, \ldots, 1 / 2^{\lceil \log (1 / \tau) \rceil}\) --- which still takes \(\Oh{\sum_{t = 0}^{2^{\lceil \log (1 / \tau) \rceil}} 2^t} = \Oh{1 / \tau}\) time --- then we can modify the instances to reduce the total number of 1s in their binary strings.  Specifically, for \(0 \leq t \leq \lceil \log \sigma \rceil\), let $F_b^t$ be the binary string $F_b$ in the instance for threshold \(1 / 2^t\); we modify $F_b^t$ such that \(F_b^t [k] = 1\) if and only if the number of occurrences of the distinct element \(S [k]\) in \(S [k..k + 2^{b + 1} - 1]\) is at least $2^{b - t}$ times but less than $2^{b - t + 
1}
$.

For \(0 \leq b \leq \lfloor \log n \rfloor\) and \(1 \leq k \leq n\), we have
\(F_b^t [k] = 1\) for at most one value of $t$.  Therefore, all the binary
strings contain a total of at most \(n (\lfloor \log n \rfloor + 1)\) copies of 1,
so all the range-minimum data structures take a total of $\Oh{n \log n}$ bits.
Since the binary strings have total length \(n \lceil \log n \rceil \lceil
\log \sigma \rceil\), we can use P\v{a}tra\c{s}cu's data structure to store
them in a total of $\Oh{n \log (n) \log \log \sigma}$ bits.  A slightly neater
approach is to represent all the binary strings \(F_b^0, \ldots, F_b^{\lceil
\log \sigma \rceil}\) as a single string \(T_b [1..n]\) in which \(T_b [k] =
t\) if \(F_b^t [k] = 1\), and $\infty$ if there is no such value $t$.  We can
implement access, rank and select queries on \(F_b^0, \ldots, F_b^{\lceil \log
\sigma \rceil}\) by access, rank and select queries on $T_b$.  Since $T_b$ is
an alphabet of size $\Oh{\log \sigma}$, we can use Ferragina et al.'s data
structure to store it in $\Oh{n \log \log \sigma}$ bits and support access, rank and select queries in $\Oh{1}$ time.  Either way, in total we use $\Oh{n \log \log \sigma}$ space.

\begin{theorem} \label{thm:fast maj}
We can store $S$ in $\Oh{n \log \log \sigma}$ space such that later, given the endpoints of a range and $\tau$, we can return the $\tau$-majorities for that range in $\Oh{1 / \tau}$ time.
\end{theorem}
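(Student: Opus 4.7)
The plan is to layer the fixed-$\tau$ construction of Theorem~\ref{thm:fixed maj} across a geometric sequence of thresholds and then share storage aggressively across those layers. First I would handle the easy boundary case: whenever $1/\tau \geq \sigma$, Muthukrishnan's algorithm listing all distinct elements of $S[i..j]$ runs in $\Oh{\sigma}=\Oh{1/\tau}$ time using the shared access/select/partial-rank structures on $S$, and Lemma~\ref{lem:check} verifies each candidate in constant time. So I only need to cover thresholds $1, 1/2, 1/4, \ldots, 1/2^{\lceil \log \sigma \rceil}$. A single copy of the rank/select/access structures on $S$ is reused across all thresholds, so the real cost to drive down is that of the per-level flag bitmaps $F_b^t$ and the range-minimum structures built on the subsequences they induce.

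The next step is to refine those bitmaps so each position is flagged at most once across the thresholds. Concretely, I would redefine $F_b^t[k]=1$ to mean that the number of occurrences of $S[k]$ in $S[k..k+2^{b+1}-1]$ lies in the half-open band $[2^{b-t},2^{b-t+1})$. Since the frequency of a single element inside a fixed window falls into exactly one band, for each pair $(b,k)$ there is at most one $t$ with $F_b^t[k]=1$. I can therefore collapse the $\Oh{\log\sigma}$ bitmaps at a given level $b$ into a single string $T_b[1..n]$ whose letter at position $k$ is that unique $t$ (or $\infty$ if none); this is a string over an alphabet of size $\Oh{\log\sigma}$, and by Ferragina et al.'s representation it fits in $\Oh{n \log \log \sigma}$ bits with $\Oh{1}$-time access, rank and select, which in turn gives $\Oh{1}$-time access, rank and select on every $F_b^t$. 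The range-minimum structures over the flagged subsequences cost $\Oh{1}$ bits per set flag, and since each level has at most $n$ set flags in total and there are $\Oh{\log n}$ levels, these add $\Oh{n \log n}$ bits $=\Oh{n}$ words, yielding $\Oh{n \log \log \sigma}$ overall.

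For queries, given $\tau$ and the range $[i..j]$, I set $b = \lfloor \log(j-i+1)\rfloor$ and run the Theorem~\ref{thm:fixed maj} procedure once per level $t=0,1,\ldots,\lceil \log(1/\tau)\rceil$, verifying every candidate using Lemma~\ref{lem:check}. Layer $t$ produces $\Oh{2^t}$ candidates because each flagged element in that band has at least $2^{b-t}$ occurrences in its window, so the total work is $\Oh{\sum_{t \leq \lceil \log(1/\tau)\rceil} 2^t} = \Oh{1/\tau}$. Correctness rests on the fact that any $\tau$-majority $a$ of $S[i..j]$ occurs at least $\tau 2^b$ times in $S[i..i+2^{b+1}-1]$, so its frequency there lies in some band $t \leq \lceil\log(1/\tau)\rceil$; the leftmost occurrence of $a$ in $S[i..j]$ is therefore flagged in $F_b^t$, is recovered by the range-minimum listing on $T_b$ restricted to $t$, and is verified as a majority by Lemma~\ref{lem:check}.

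The main obstacle I expect is the interaction between the exclusive band condition and the leftmost-occurrence requirement of Lemma~\ref{lem:check}. The lemma needs the position of the first occurrence of the candidate inside $[i..j]$, but my flagging is tied to occurrences counted forward from each position, so I must argue that the first occurrence of a $\tau$-majority in the range is itself flagged in the appropriate band $t$ (rather than being ``overshadowed'' by a different band at the same window). This follows from a monotonicity argument — moving the window leftward only increases the count within a fixed suffix window anchored at the range — but care is needed to make sure the band is chosen consistently so that neither the $\Oh{2^t}$ candidate bound nor the coverage of all $\tau$-majorities is lost, and that the range-minimum recursion on the sparsified subsequence still returns genuine leftmost flagged occurrences as required.
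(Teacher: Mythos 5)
Your proposal matches the paper's proof essentially step for step: the banded redefinition of $F_b^t$ so each position is flagged for at most one $t$, the collapse of the per-level flags into a string $T_b$ over an alphabet of size $\Oh{\log\sigma}$ stored with Ferragina et al.'s structure, the $\Oh{n\log n}$-bit total for the range-minimum structures, querying all bands $t\le\lceil\log(1/\tau)\rceil$ for a geometric total of $\Oh{1/\tau}$ candidates, and verification via Lemma~\ref{lem:check}. The worry you raise at the end is resolved exactly as you suggest (the leftmost occurrence of a $\tau$-majority has forward-window frequency at least $\tau 2^b$, so it lands in some band $t'\le\lceil\log(1/\tau)\rceil$ and is the leftmost flagged position of that element in that band), which is the same observation the paper makes in the fixed-$\tau$ case.
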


\subsection{Optimally compressed space with nearly optimal query time} \label{subsec:small maj}

To be able to apply Lemma~\ref{lem:check}, we must be able to find the leftmost occurrence of each $\tau$-majority in a range.  For this reason, we may flag many occurrences of the same distinct element even when they appear in close succession, because we cannot know in advance where the query range will start.  As discussed in Section~\ref{sec:fixed maj}, however, if we have a data structure that supports rank queries on $S$, then it is sufficient for us to build a list of $\Oh{1 / \tau}$ candidate elements that includes all the $\tau$-majorities --- without any information about positions --- and then verify them using rank queries.  This lets us flag fewer elements and so reduce our space bound, at the cost of using slightly suboptimal query time.

We store an instance of Barbay et al.'s data structure \cite{BCGNN13} supporting access on $S$ in $\Oh{1}$ time and rank and select on $S$ in $\Oh{\log \log \sigma}$ time, which takes \(n H + o (n (H + 1))\) bits.  For \(0 \leq t \leq \lceil \log \sigma \rceil\) and \(\lfloor \log (2^t \log \log \sigma) \rfloor \leq b \leq \lfloor \log n \rfloor\), we divide $S$ into blocks of length $2^{b - 1}$ and store data structures supporting access, rank and select on the binary string \(G_b^t [1..n]\) in which \(G_b^t [k] = 1\) if, first, the distinct element \(S [k]\) occurs at least $2^{b - t}$ times in \(S [k - 2^{b + 1}..k + 2^{b + 1}]\) and, second, \(S [k]\) is the leftmost or rightmost occurrence of that distinct element in its block.  We also store an $\Oh{1}$-time range-minimum data structure for the subsequence of $C$ consisting of elements flagged by 1s in $G_b^t$.

The number of distinct elements that occur at least $2^{b - t}$ times in a range of size $\Oh{2^b}$ is $\Oh{2^t}$, so there are $\Oh{2^t}$ elements in each block flagged by 1s in $G_b^t$.  It follows that we can store an instance of P\v{a}tra\c{s}cu's data structure supporting $\Oh{1}$-time access, rank and select on $G_b^t$ in $\Oh{n 2^{t - b} (b - t) + n / \log^3 n}$ bits in total; we need $\Oh{n2^{t-b}}$ bits for the corresponding range-minimum data structure.  Summing over $t$ from 0 to \(\lceil \log \sigma \rceil\) and over $b$ from \(\lfloor \log (2^t \log \log \sigma) \rfloor\) to \(\lfloor \log n \rfloor\), calculation shows we use a total of \(\Oh{\frac{n \log \sigma \log \log \log \sigma}{\log \log \sigma} + \frac{n}{\log n}} = o (n \log \sigma)\) bits for the binary strings and range-minimum data structures.  Therefore, including the instance of Barbay et al.'s data structure for $S$, we use \(n H + o (n \log \sigma)\) bits altogether.

Given endpoints $i$ and $j$ and a threshold $\tau$, if
\[\lfloor \log (j - i + 1) \rfloor
< \left\lfloor \log \left( 2^{\lceil \log (1 / \tau) \rceil} \log \log \sigma \right) \right\rfloor\,,\]
then we simply run Misra and Gries' algorithm on \(S [i..j]\) in \(\Oh{j - i} = \Oh{(1 / \tau) \log \log \sigma}\) time.  Otherwise, we use Muthukrishnan's algorithm to list the distinct elements flagged by 1s in $G_b^t$, where \(t = \lceil \log (1 / \tau) \rceil\) and \(b = \lfloor \log (j - i + 1) \rfloor \geq \lfloor \log (2^t \log \log \sigma) \rfloor\), and use rank queries on $S$ to check whether each of them is a $\tau$-majority in \(S [i..j]\).  Since \(S [i..j]\) overlaps at most 5 blocks of length $2^{b - 1}$, it contains $\Oh{1 / \tau}$ distinct elements flagged by 1s in $G_b^t$; therefore, Muthukrishnan's algorithm takes $\Oh{1 / \tau}$ time and we use a total of $\Oh{(1 / \tau) \log \log \sigma}$ time for all the rank queries on $S$.

Since \(S [i..j]\) cannot be completely contained in a block of length $2^{b - 1}$, if \(S [i..j]\) overlaps a block then it includes one of that block's endpoints.  Therefore, if \(S [i..j]\) contains an occurrence of a distinct element $a$, then it includes the leftmost or rightmost occurrence of $a$ in some block.  Suppose $a$ is a $\tau$-majority in \(S [i..j]\).  For all \(i \leq k \leq j\), $a$ occurs at least $\tau 2^b \ge 2^{b - t}$ times in \(S [k - 2^{b + 1}..k + 2^{b + 1}]\), so some occurrence of $a$ in \(S [i..j]\) is flagged by a 1 in $G_b^t$.  Therefore, we return $a$.

\begin{theorem} \label{thm:small maj}
We can store $S$ in \(n H + o (n \log \sigma)\) bits such that later, given the endpoints of a range and $\tau$, we can return the $\tau$-majorities for that range in $\Oh{(1 / \tau) \log \log \sigma}$ time.
\end{theorem}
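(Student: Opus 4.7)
The plan is to combine a compact sequence representation of $S$ that supports fast candidate \emph{verification} with, for each (range-size, frequency) scale, a sparse flagging bitmap that lets Muthukrishnan's listing procedure produce $O(1/\tau)$ candidates cheaply. For $S$ itself I would use Barbay et al.'s alphabet-partitioned structure from Section~\ref{subsec:alpha_partition}, costing $nH+o(n(H+1))$ bits with $O(1)$ access and $O(\log\log\sigma)$ rank and select; rank is exactly what Lemma~\ref{lem:check} (or a direct counting check) needs to certify a candidate in $O(\log\log\sigma)$ time, and $O(1)$ access plus partial rank and select let us simulate $C$ on the fly as in Section~\ref{subsec:listing} so we need not store $C$ explicitly.

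For each $t\in[0..\lceil\log\sigma\rceil]$ and each $b$ with $2^b\ge 2^t\log\log\sigma$, I would partition $S$ into blocks of length $2^{b-1}$ and define a bitmap $G_b^t[1..n]$ that marks position $k$ iff (i)~$S[k]$ occurs at least $2^{b-t}$ times in $S[k-2^{b+1}..k+2^{b+1}]$ and (ii)~$k$ is the leftmost or the rightmost occurrence of $S[k]$ inside its block. Condition (i) caps the distinct eligible symbols inside any $O(2^b)$-wide window at $O(2^t)$, so at most $O(2^t)$ ones appear per block, i.e.\ $O(n\,2^{t-b})$ overall. I would store $G_b^t$ with P\v{a}tra\c{s}cu's very-sparse bitmap ($O(1)$ access, rank, select in $O(n\,2^{t-b}(b-t)+n/\log^3 n)$ bits) and attach to the induced subsequence of $C$ the $O(n\,2^{t-b})$-bit RMQ data structure of Section~\ref{subsec:listing}.

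On a query $(i,j,\tau)$, set $b=\lfloor\log(j-i+1)\rfloor$ and $t=\lceil\log(1/\tau)\rceil$. If $b$ falls below the smallest level built, then $j-i+1=O((1/\tau)\log\log\sigma)$ and I just run Misra--Gries on $S[i..j]$. Otherwise I invoke Muthukrishnan's algorithm on the $G_b^t$-induced subsequence of $C$ restricted to $[i..j]$; since $[i..j]$ meets only $O(1)$ length-$2^{b-1}$ blocks, it contains $O(1/\tau)$ flagged positions, so the listing runs in $O(1/\tau)$ time and yields $O(1/\tau)$ candidates, each checked by one rank on $S$ in $O(\log\log\sigma)$ time, for total $O((1/\tau)\log\log\sigma)$. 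Correctness of ``every $\tau$-majority is a candidate'' is the crux: $[i..j]$ has length $\ge 2^{b-1}$ so cannot lie inside a single block, hence contains the leftmost or rightmost in-block occurrence of any element that recurs in some crossed block; and at every $k\in[i..j]$ the window $[k-2^{b+1}..k+2^{b+1}]$ covers $[i..j]$, so any $\tau$-majority appears at least $\tau(j-i+1)\ge 2^{b-t}$ times there, triggering condition (i).

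The main obstacle is the space accounting. Fixing $t$ and summing the $O(n\,2^{t-b}(b-t))$ contributions over $b\ge\log(2^t\log\log\sigma)$ gives a geometric series whose dominant term is $O(n\log\log\log\sigma/\log\log\sigma)$; summing this over the $O(\log\sigma)$ values of $t$ yields $O(n\log\sigma\cdot\log\log\log\sigma/\log\log\sigma)=o(n\log\sigma)$ bits for the bitmaps, while the RMQ terms sum to $O(n)$ and the $n/\log^3 n$ slack adds only $o(n)$ in aggregate. Adding the $nH+o(n(H+1))$ bits spent on $S$ delivers the claimed $nH+o(n\log\sigma)$ bound, and the combined query-time argument above finishes the proof.
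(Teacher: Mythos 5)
Your proposal reproduces the paper's argument essentially verbatim: the same Barbay et al.\ base structure, the same bitmaps $G_b^t$ with the same two marking conditions and the same level restriction $2^b \ge 2^t\log\log\sigma$, the same P\v{a}tra\c{s}cu encoding and per-$G_b^t$ RMQ, the same fallback to Misra--Gries for short ranges, and the same ``a range of length $\ge 2^{b-1}$ cannot fit in a block'' correctness argument. The only slip is in the RMQ bookkeeping: $\sum_t\sum_b n\,2^{t-b}$ is $\Oh{n\log\sigma/\log\log\sigma}$, not $\Oh{n}$ --- still dominated by the $\Oh{n\log\sigma\cdot\log\log\log\sigma/\log\log\sigma}$ bitmap term, so the $o(n\log\sigma)$ conclusion is unaffected.
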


In order to reduce the space further, we open the black-box of Barbay et al.'s
data structure. This separates the sequence symbols into $\lg^2 n$ classes 
according to their frequencies. A sequence $K[1,n]$, where $K[i]$ is the class
to which $S[i]$ is assigned, is represented using a (multi-ary) wavelet tree 
\cite{FMMN07}, which supports constant-time access, rank, and select, since
the alphabet of $K$ is of polylogarithmic size. For each class $c$, a sequence
$S_c[1..n_c]$ contains the subsequence of $S$ of the symbols $S[i]$ where
$K[i]=c$ (note that if $S[i]=S[j]$ then $K[i]=K[j]$). They represent $K$, and
the subsequences $S_c$ where $\sigma_c \le \lg n$, using 
wavelet trees. The subsequences $S_c$ over larger alphabets are represented
using Golynski et al.'s structure \cite{GMR06}. The wavelet tree for $K$
takes $nH_K+o(n)$ bits, where $H_K$ is the entropy of $K$, the wavelet trees
for the strings $S_c$ take $n_c\lg\sigma_c + o(n_c)$ bits, and Golynski et
al.'s structures take $n_c\lg\sigma_c + o(n_c\lg\sigma)$ bits. 
Barbay et al.\ show that these spaces
add up to $nH + o(n)(H+1)$ and that one can support access, rank and select on 
$S$ via access, rank and select on $K$ and some $S_c$.

We will solve a $\tau$-majority query on $S[i..j]$ as 
follows. We first run a $\tau$-majority query on string $K$. This will yield
the at most $1/\tau$ classes of symbols that, together, occur at least 
$\tau(j-i+1)$ times in $S[i..j]$. The classes excluded from this result cannot
contain symbols that are $\tau$-majorities. Now, for each included class $c$,
we map the interval $S[i..j]$ to $S_c[i_c..j_c]$ in the subsequence of its
class, since $i_c = K.\rank_c(i-1)+1$ and $j_c = K.\rank_c(j)$, and then 
run a $\tau_c$-majority query on $S_c[i_c..j_c]$, for 
$\tau_c = \tau(j-i+1)/(j_c-i_c+1)$. The results obtained for each 
considered class $c$ are reported as $\tau$-majorities in $S[i..j]$.

To run the $\tau_c$-majority queries on the sequences $S_c$ that are 
implemented with Golynski et al.'s structure, we store our representation of 
Theorem~\ref{thm:small maj}. This will add $o(n_c\log\sigma_c)$ bits, which 
does not change the asymptotic space of the data structure. Therefore we will
take $\Oh{(1/\tau_c)\log\log \sigma}$ time to solve those majority queries.
Added over all the possible $\tau_c$ values, we have
$\sum_c (1/\tau_c) \Oh{\log\log\sigma} =
\sum_c (j_c-i_c+1)/(\tau(j-i+1)) \Oh{\log\log\sigma} =
\Oh{(1/\tau)\lg\lg\sigma}$ total time on those sequences.

Let us now consider the case of the query on $K[i..j]$. 
Since the alphabet size is $\log^2 n$, we will partition it
into $\lg^{2/3} n$ classes of $\lg^{4/3} n$ consecutive symbols, and 
subpartition these new classes into $\lg^{2/3} n$ classes of $\lg^{2/3} n$ 
symbols. This works just like the general partitioning into classes: we
perform a $\tau$-majority query in the first level, then several queries
adding up to cost $1/\tau$ on the second level, and then several queries
adding up to cost $1/\tau$ on the third level, and then go to the subsequences
$S_c$. It is sufficient to show that we can perform a $\tau$-majority query on
any sequence with alphabet size $\lg^{2/3} n$ to obtain the result. The
entropies of the three sequences add up to $nH_K+o(n)$ (indeed, this leveled
partitioning is how the wavelet tree is actually organized).

To solve a $\tau$-majority query on a sequence with alphabet size 
$\sigma' = \lg^{2/3} n$, we will use again Theorem~\ref{thm:small maj}, with a 
slightly larger block size, 
$\lfloor \lg(2^t \lg\lg\sigma \lg n / \lg\lg n) \rfloor$ $\le b \le \lfloor \lg n 
\rfloor$, and for $0 \le t \le \lceil \lg\sigma' \rceil$. 
Thus the structures $G_b^t$ and 
the range-minimum data structures add up to 
$\Oh{\frac{n\lg\sigma' (\lg\lg n)^2}{\lg\lg\sigma \lg n}} =
\Oh{\frac{n(\lg\lg n)^3}{\lg n}} = o(n)$.

Since the wavelet tree implements rank, select and access in constant time,
the $\tau$-majority operation is solved in time $\Oh{1/\tau}$, except on 
blocks of size $b_0 = \Oh{(1/\tau) \lg\lg\sigma \lg n / \lg\lg n}$, which have 
to be solved sequentially in time $\Oh{(1/\tau)\lg\lg\sigma}$. As before, we
can find the majorities in time $O(\sigma')$ by using rank over all the
symbols, so if $1/\tau \ge \sigma'$ we can simply do this to 
achieve $\Oh{1/\tau}$ time. Otherwise, we can maintain counters for
all the $\sigma'$ distinct symbols, each using $\Oh{\lg\lg n}$ bits to
distinguish values from $0$ to $b_0 = \Oh{\sigma'\lg n}$, using $o(\lg n)$
bits overall. Therefore a universal table lets us read chunks of
$(\lg_{\sigma'} n)/2 = \Theta(\lg n / \lg\lg n)$ symbols and increase the
corresponding counters in constant time. Thus the block can be
processed sequentially in time $\Oh{(1/\tau) \lg\lg\sigma}$.

Finally, the same technique used for string $K$ can be used for the sequences
$S_c$ that are represented with wavelet trees, since their alphabet size is
just $\lg n$. Overall, we have managed to reduce the redundancy of our
representation.

\begin{theorem} \label{thm:small maj better}
We can store $S$ in \(n H + o(n)(H+1)\) bits such that later, given the endpoints of a range and $\tau$, we can return the $\tau$-majorities for that range in $\Oh{(1 / \tau) \log \log \sigma}$ time.
\end{theorem}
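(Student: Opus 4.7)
The plan is to assemble Theorem~\ref{thm:small maj better} from three ingredients: the alphabet-partitioning representation of Barbay et al.~\cite{BCGNN13}, the data structure of Theorem~\ref{thm:small maj}, and a recursive re-application of alphabet partitioning to handle the wavelet-tree-represented pieces. The starting observation is that Theorem~\ref{thm:small maj}'s additive $o(n\log\sigma)$ redundancy is too wasteful to apply directly to $S$, but becomes acceptable once applied to subsequences whose alphabet size is commensurate with their length.

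First I would open the alphabet-partitioning decomposition: a polylog-alphabet class sequence $K[1..n]$ and, for each class $c$, a subsequence $S_c[1..n_c]$. A $\tau$-majority query on $S[i..j]$ reduces to (i) a $\tau$-majority query on $K[i..j]$, producing at most $1/\tau$ candidate classes, and (ii) for each such class $c$, a $\tau_c$-majority query on $S_c[i_c..j_c]$, with $\tau_c=\tau(j-i+1)/(j_c-i_c+1)$ and $i_c,j_c$ obtained from rank on $K$. Because $\sum_c 1/\tau_c = O(1/\tau)$, the total query time telescopes to $O((1/\tau)\log\log\sigma)$, provided each individual $\tau_c$-query costs $O((1/\tau_c)\log\log\sigma)$.

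For the large-alphabet subsequences $S_c$ (those with $\sigma_c > \log n$, represented via Golynski et al.), I would overlay Theorem~\ref{thm:small maj}; its $o(n_c\log\sigma_c)$ redundancy summed over these classes is $o(n)(H+1)$ by standard entropy bookkeeping, matching our budget. The delicate case is the small-alphabet pieces ($K$ itself and those $S_c$ represented by wavelet trees), whose total entropy budget is only $nH_K+o(n)$; here the $o(n\log\sigma)$ slack of Theorem~\ref{thm:small maj} is not small enough. The plan is to recurse: split the alphabet of $K$ into $\log^{2/3} n$ super-classes of $\log^{4/3} n$ symbols, then again into $\log^{2/3} n$ classes of $\log^{2/3} n$ symbols. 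A query on $K$ runs the same reduction at each of the three levels, so it suffices to solve $\tau$-majority on sequences of alphabet $\sigma'=\log^{2/3} n$ inside $o(n)$ extra bits and $O((1/\tau)\log\log\sigma)$ time.

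The main obstacle is precisely this leaf case. The plan is to instantiate Theorem~\ref{thm:small maj} with blocks enlarged by a $\log n/\log\log n$ factor; a routine recomputation shows the $G_b^t$ and range-minimum overhead collapses to $o(n)$ rather than $o(n\log\sigma')$. The price is that ranges shorter than the enlarged block must be solved sequentially, in a block of size $b_0=O((1/\tau)\log\log\sigma\cdot\log n/\log\log n)$. To keep the sequential scan within $O((1/\tau)\log\log\sigma)$ time, I would maintain a counter array of $\sigma'=\log^{2/3} n$ entries of $O(\log\log n)$ bits each, fitting in $o(\log n)$ bits total, and use a universal table indexed by chunks of $\tfrac{1}{2}\log_{\sigma'} n = \Theta(\log n/\log\log n)$ symbols to bump all relevant counters in constant time per chunk. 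The same trick of re-partitioning plus block-inflated Theorem~\ref{thm:small maj} then applies verbatim to the wavelet-tree-represented $S_c$, since their alphabet is at most $\log n$. The only step that needs genuine care is verifying that the enlarged-block, table-assisted variant of Theorem~\ref{thm:small maj} still returns every $\tau$-majority — this follows because the flagging condition is calibrated to the new block size and the query procedure is unchanged, but I would write this check out explicitly before declaring the bounds.
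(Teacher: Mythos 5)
Your proposal follows the paper's own proof essentially step for step: open Barbay et al.'s alphabet-partitioning black box, reduce to a $\tau$-majority query on $K$ plus rescaled $\tau_c$-queries on the classes $S_c$, overlay Theorem~\ref{thm:small maj} on the Golynski-represented pieces, and recursively re-partition the polylog-alphabet pieces into $\log^{2/3} n$-sized classes, handling leaves with block-enlarged $G_b^t$ structures and a counter-array/universal-table scan for short ranges. This is the same decomposition, the same key subroutines, and the same space/time accounting as in the paper; the only cosmetic omission is the explicit shortcut of answering with $\sigma'$ rank queries when $1/\tau\ge\sigma'$, which the paper mentions as a separate easy case.
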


Since our solution includes an instance of Barbay et al.'s data structure, we can also support $\Oh{1}$-time access to $S$ and $\Oh{\log \log \sigma}$-time rank and select.

\subsection{Faster query time with nearly optimally compressed space} \label{subsec:linear maj}

Recall from Section~\ref{subsec:fast maj} that, if \(1 / \tau \geq \sigma\), then we can simply use Muthukrishnan's algorithm to list all the distinct elements in a range and then check them with Lemma~\ref{lem:check}; therefore, we can assume \(1 / \tau < \sigma\).  In this subsection we use our new data structure with density-sensitive query time for one-dimensional range counting of Theorem~\ref{thm:counting} to obtain a nearly optimally compressed data structure for parameterized range majority with $\Oh{(1 / \tau) \log \frac{\log (1 / \tau)}{\log w}}$ query time.

To obtain a compressed data structure for parameterized range majority with $\Oh{(1 / \tau) \log \frac{\log (1 / \tau)}{\log w}}$ query time, we combine our solution from Theorem~\ref{thm:small maj better} with Theorem~\ref{thm:counting}.  Instead of using $\Oh{\log \log \sigma}$-time rank queries to check each of the $\Oh{1 / \tau}$ candidate elements returned by Muthukrishnan's algorithm, we use range-counting queries.  We can make all $\Oh{1 / \tau}$ range-counting queries each take $\Oh{\log \frac{\log (1 / \tau)}{\log w}}$ time because, if one starts taking too much time, then the distinct element we are checking cannot be a $\tau$-majority and we can stop the query early.  (In fact, as we will show in the full version of this paper, our data structure from Theorem~\ref{thm:counting} does not need such intervention.)
This gives us our final result:

\begin{theorem} \label{thm:sensitive maj}
We can store $S$ in \((1 + \epsilon) n H + o(n)\) bits such that later, given the endpoints of a range and $\tau$, we can return the $\tau$-majorities for that range in $\Oh{(1 / \tau) \log \frac{\log (1 / \tau)}{\log w}}$ time.
\end{theorem}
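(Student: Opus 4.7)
The strategy is to take the candidate-generation framework of Theorem~\ref{thm:small maj better} and replace its $\Oh{\log\log\sigma}$-time rank-based verification step with the density-sensitive range counting of Theorem~\ref{thm:counting}. Concretely, I would store $S$ using the representation of Theorem~\ref{thm:counting}, which already fits in $(1+\epsilon)nH + o(n)$ bits and supports $\Oh{1}$-time access and select (and hence $\Oh{1}$-time access to the leftward-link array $C$ of Section~\ref{subsec:listing}), and which additionally answers range-count queries $\occ(a, S[i..j])$ in time $\Oh{\log \frac{\log((j-i+1)/(\occ+1))}{\log w}}$. On top of $S$ I install the flag bitmaps $G_b^t$ and their range-minimum data structures from the proof of Theorem~\ref{thm:small maj better}, using the recursive layered partitioning employed there to keep their total footprint at $o(n)$ bits. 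The overall space remains $(1+\epsilon)nH + o(n)$.

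Given a query $(i,j,\tau)$, the case $1/\tau \ge \sigma$ is dispatched exactly as in Section~\ref{subsec:fast maj} by running Muthukrishnan's algorithm directly on $S$ and verifying the at most $\sigma = \Oh{1/\tau}$ distinct elements with Lemma~\ref{lem:check}. Assume then that $1/\tau < \sigma$. I invoke Muthukrishnan's algorithm on the appropriate $G_b^t$ with $b = \lfloor \log(j-i+1)\rfloor$ and $t = \lceil \log(1/\tau)\rceil$ to produce $\Oh{1/\tau}$ candidate symbols in $\Oh{1/\tau}$ total time; by the same argument as in Theorem~\ref{thm:small maj better}, this list covers every $\tau$-majority of $S[i..j]$. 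For each candidate $c$, I then query the range-counting structure for $\occ(c, S[i..j])$ and compare it to $\lceil \tau(j-i+1)\rceil$. If $c$ is genuinely a $\tau$-majority, then $\occ \ge \tau(j-i+1)$, so $(j-i+1)/(\occ+1) \le 1/\tau$, and the count completes in $\Oh{\log \frac{\log(1/\tau)}{\log w}}$ time.

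The main obstacle is that a non-majority candidate may have arbitrarily small frequency and hence a slow count. I handle this by running each counting query under an explicit time budget of $\Oh{\log \frac{\log(1/\tau)}{\log w}}$ steps: if the structure has not returned a value by then, necessarily $\occ+1 < \tau(j-i+1)$, and $c$ is safely discarded as a non-majority. Because the query algorithm of Theorem~\ref{thm:counting} examines its levels in a fixed, predictable order, this time-capping can be implemented without affecting correctness on true majorities. Summed over the $\Oh{1/\tau}$ candidates, verification costs $\Oh{(1/\tau) \log \frac{\log(1/\tau)}{\log w}}$, which added to the $\Oh{1/\tau}$ candidate-generation time yields the claimed query bound. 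As the paper notes, a sharper inspection of Theorem~\ref{thm:counting} removes the need for the explicit budget entirely, but the budgeted version already suffices to prove Theorem~\ref{thm:sensitive maj}.
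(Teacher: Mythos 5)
Your proposal matches the paper's own proof essentially step for step: it combines the candidate-generation machinery of Theorem~\ref{thm:small maj better} with the density-sensitive counting of Theorem~\ref{thm:counting}, verifies each of the $\Oh{1/\tau}$ candidates by a capped range-counting query (discarding any candidate whose count exceeds the $\Oh{\log\frac{\log(1/\tau)}{\log w}}$ budget, since it cannot be a $\tau$-majority), and handles $1/\tau \ge \sigma$ separately via Lemma~\ref{lem:check}. The argument is correct and no further comparison is needed.
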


Notice our solution in Theorem~\ref{thm:sensitive maj} takes optimal $\Oh{1 / \tau}$ time when \(1 / \tau = \log^{\mathcal{O} (1)} n\).  Again, we can also support access and select in $\Oh{1}$ time and rank in $\Oh{\log \log \sigma}$ time.

\section{Frequent Range Modes}

We note that we can use our data structures from Theorem~\ref{thm:small maj better} to find a range mode quickly when it is actually reasonably frequent.  Suppose we want to find the mode $x$ of \(S [i..j]\).  To do this, we perform multiple range $\tau$-majority queries on \(S [i..j]\), starting with \(\tau = 1\) and repeatedly reducing it by a factor of 2 until we find at least one $\tau$-majority.  This takes
\[\Oh{\left(1 + 2 + 4 + \ldots + 2^{\left\lceil \log \frac{j - i + 1}{\occ (x, S [i..j])} \right\rceil} \right) \log \log \sigma}
= \Oh{\frac{(j - i + 1) \log \log \sigma}{\occ (x, S [i..j])}}\]
time and returns a list of the $\Oh{\frac{j - i + 1}{\occ (x, S [i..j])}}$ elements that occur at least \(\frac{j - i + 1}{2^{\left\lceil \log \frac{j - i + 1}{\occ (x, S [i..j])} \right\rceil}}\) times in \(S [i..j]\).  We use rank queries to determine which of these elements is the mode $x$, again in \Oh{\frac{(j - i + 1) \log \log \sigma}{\occ (x, S [i..j])}} time.

\begin{theorem} \label{thm:frequent modes}
We can store $S$ in \(n H + o (n) (H + 1)\) bits such that later, given endpoints $i$ and $j$, we can return the mode $x$ of \(S [i..j]\) in $\Oh{\frac{(j - i + 1) \log \log \sigma}{\occ (x, S [i..j])}}$ time.
\end{theorem}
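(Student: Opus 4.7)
The plan is to reuse, without modification, the data structure of Theorem~\ref{thm:small maj better}, which already occupies \(nH+o(n)(H+1)\) bits and supports both parameterized range majority in $\Oh{(1/\tau)\log\log\sigma}$ time and rank on $S$ in $\Oh{\log\log\sigma}$ time. Only the query algorithm is new, so the space bound is inherited for free; the work is entirely in the time analysis.

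First I would perform a geometric (doubling) search on the threshold $\tau$. Let $f=\occ(x,S[i..j])$ be the frequency of the (unknown) mode. Starting with $\tau=1$, I repeatedly halve $\tau$ and run a parameterized $\tau$-majority query on $S[i..j]$, stopping at the first iteration that returns a non-empty list. The search must terminate by the time $\tau\le f/(j-i+1)$, because the true mode $x$ is itself a $\tau$-majority for any such $\tau$; hence the number of iterations is $k=\Oh{\log((j-i+1)/f)}$ and the final threshold satisfies $1/\tau=\Oh{(j-i+1)/f}$.

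Second, once I obtain a non-empty candidate list of size at most $1/\tau$, I know by correctness of Theorem~\ref{thm:small maj better} that $x$ is in this list. I then compute $\occ(a,S[i..j])$ for every candidate $a$ by two $\Oh{\log\log\sigma}$-time rank queries and return the candidate with the largest count. This step takes $\Oh{(1/\tau)\log\log\sigma}=\Oh{((j-i+1)/f)\log\log\sigma}$ time, which fits the target bound.

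Finally, the time for the doubling phase is $\sum_{t=0}^{k}\Oh{2^{t}\log\log\sigma}=\Oh{2^{k}\log\log\sigma}=\Oh{((j-i+1)/f)\log\log\sigma}$ by the standard geometric-sum argument, so the total running time matches the claim. I do not anticipate a genuine obstacle here; the only delicate point is to verify that the doubling search is charged only for the final iteration (up to a constant factor), which is immediate from the geometric sum, and to observe that rank on $S$ is available at the stated cost inside the same structure of Theorem~\ref{thm:small maj better}.
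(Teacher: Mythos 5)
Your proposal is correct and follows essentially the same route as the paper: a doubling search on $\tau$ using the structure of Theorem~\ref{thm:small maj better}, a geometric-sum bound of $\Oh{((j-i+1)/\occ(x,S[i..j]))\log\log\sigma}$ for the search, and rank queries to pick the most frequent candidate from the final list. Your added remark that the mode must appear in the first non-empty list (since its frequency dominates that of whichever element triggered termination) is a correct and slightly more explicit justification than the paper gives.
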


\section{Conclusions} \label{sec:conclusions}

We have given the first linear-space data structure for parameterized range majority with query time $\Oh{1 / \tau}$, which is worst-case optimal in terms of $n$ and $\tau$.  Moreover, we have improved the space bounds for parameterized range majority and minority in the important case of variable $\tau$.  For parameterized range majority with variable $\tau$, we have achieved nearly linear space and worst-case optimal query time, or compressed space with a slight slowdown.  For parameterized range minority, we have improved Chan et al.'s solution to use nearly compressed space with no slowdown or compressed space with nearly no slowdown.  We leave as an open problem achieving linear or compressed space with $\Oh{1 / \tau}$ query time for variable $\tau$, or showing that this is impossible.

\begin{acks}
Many thanks to Patrick Nicholson for helpful comments.
\end{acks}

\bibliographystyle{authordate1}
\bibliography{wads13}

\end{document}